\pgfplotsset{compat=newest}
\newtheorem{thm}{Theorem}
\newtheorem{rem}{Remark}
\newtheorem{pro}{Proposition}
\journal{Journal of Control Engineering Practice}
\begin{document}

\begin{frontmatter}

\title{Continuous Reset Element:\\ \small{Transient and Steady-state Analysis for Precision Motion Systems} }

\author{Nima Karbasizadeh, S. Hassan HosseinNia\fnref{myfootnote}}
\address{Department
	of Precision and Microsystem Engineering, Delft University of Technology, Delft,
	The Netherlands}
\fntext[myfootnote]{Corresponding Author}




\begin{abstract}
This paper addresses the main goal of using reset control in precision motion control systems, breaking of the well-known ``Waterbed effect''. A new architecture for reset elements will be introduced which has a continuous output signal as opposed to conventional reset elements. A steady-state precision study is presented, showing the steady-state precision is preserved while the peak of sensitivity is reduced. The architecture is then used for a ``Constant in Gain Lead in Phase'' (CgLp) element and a numerical analysis on transient response shows a significant improvement in transient response. It is shown that by following the presented guideline for tuning, settling time can be reduced and at the same time a no-overshoot step response can be achieved.  A practical example is presented to verify the results and also to show that the proposed element can achieve a complex-order behaviour. 
\end{abstract}

\begin{keyword}
Precision Motion Control \sep  Constant in Gain Lead in Phase \sep Reset Control Systems\sep Waterbed Effect 

\end{keyword}

\end{frontmatter}

\linenumbers

\section{Introduction}
%
%
%
%
Waterbed effect limits the performance of the linear control systems~\cite{bode1945network}. Almost every researcher in the field of control engineering has encountered this fundamental limitation. One can come up with different mathematical interpretations for it, however, most definitely, its practical effect is more important, especially for high-tech industrial applications such as precision motion control. One can interpret this effect by putting transient and steady-state response of the system on two sides of this infamous waterbed, which implicates that by improving one, you are sacrificing the other.\\
Reset control systems, first proposed by Clegg in~\cite{clegg1958nonlinear}, are proving themselves as alternatives for linear control systems as they showed potential to outperform linear control systems by breaking waterbed effect limitation. Clegg proposed an integrator whose output will reset to zero whenever its input crosses zero. It was later established that based on Describing Function (DF) analysis, such an action will reduce the phase lag of the integrator by $52^\circ$. Although this already breaks the Bode's gain-phase relation for linear control systems, there are concerns while using Clegg's Integrator (CI) in practice, namely, the accuracy of DF approximation, limit-cycle, etc.\\
In order to address the drawbacks and exploiting the benefits, the idea was later extended to more sophisticated elements such as ``First-Order Reset Element''~\cite{horowitz1975non,krishnan1974synthesis} and ``Second-Order Reset Element''~\cite{hazeleger2016second} or using Clegg's integrator in form of PI+CI~\cite{banos2007definition} or resetting the state to a fraction of its current value, known as partial resetting~\cite{beker2004fundamental}. Reset control has also recently been used to approximate the complex-order filters~\cite{valerio2019reset,saikumar2019constant}. Advantage of using reset control over linear control has been shown in many studies especially in precision motion control ~\cite{BISOFFI202037,banos2011reset,karbasizadeh2020benefiting,beker2004fundamental,dastjerdi2021frequency,Nesic2011stability,wu2006reset,zheng2000experimental,chen2001analysis,karbasizadeh2021fractional}. However, these studies are mostly focused on solving one problem. For example they either improve transient~\cite{Guo2011Optimal} or steady-state response of the system while paying little or no attention to the other.\\
One of the recent studies introduces a new reset element called ``Constant-in-Gain, Lead-in-Phase'' (CgLp) element which is proposed based on the loop-shaping concept~\cite{saikumar2019constant}. DF analysis of this element shows that it can provide broadband phase lead while maintaining a constant gain. Such an element is used in the literature to replace some part of the differentiation action in PID controllers as it will help improve the precision of the system according to loop-shaping concept~\cite{karbasizadeh2021fractional,karbasizadeh2020benefiting,dastjerdi2021frequency,saikumar2019constant}.\\
In~\cite{karbasizadeh2020benefiting,karbasizadeh2021fractional}, it is suggested that DF analysis for reset control systems can be inaccurate as it neglects the higher-order harmonics created in response of reset control systems. These studies also suggest that suppressing higher-order harmonics can improve the steady-state precision of the system.\\
One of the benefits of providing phase lead through CgLp is improving the transient response properties of the system, as it is shown that it reduces the overshoot and settling time of the system. However, the way to achieve this goal is not only through phase compensation around cross-over frequency. It is shown in~\cite{cai2020optimal} that since reset control systems are nonlinear systems, the sequence of elements in control loop affects the output of the system. It was shown that when the lead elements are placed before reset element, it can improve the overshoot of the system. However, no systematic approach is proposed there for further improving the transient response. In~\cite{ZHAO201927}, it is shown that by changing the resetting condition of reset element to reset based on its input and its derivative, overshoot limitation in linear control, systems can be overcome. This limitation has also been broken using the same technique in another hybrid control system called ``Hybrid Integrator Gain System" (HIGS)~\cite{vanden2020hybrid}. However, in these studies the effect of such an action on steady-state performance of the system is not addressed.  \\
Another important common properties of all reset elements in the literature is the discontinuity of the output signal. This properties is a cause for presence of high-frequency content in the signals and subsequent practical issues~\cite{karbasizadeh2021fractional}. Continuous time implementation as opposed to discrete time implementation of reset control and also soft resetting were introduced in the literature to mitigate this problem to some extent~\cite{teel2022continuous,le2021passive}. However, this paper proposes an approach which can also used in discrete time.  \\
The main contribution of this paper is to propose a new architecture for CgLp element which has a continuous output as opposed to conventional reset elements. This element will  drastically improves the transient response of the systems without jeopardizing the steady-state performance of the system by increasing higher-order harmonics. This paper shows that this architecture even reduces the higher-order harmonics by smoothing the reset jumps. Reset control systems are also known for having big jumps and peaks in their control input which can be a limiting factor in practical applications due to saturation. The proposed architecture will also improve this drawback. A guideline for tuning the propose architecture will also be provided.\\
The remainder of this paper is organized as follows: The first section will present the preliminaries of the study. The following section will present the continuous reset architecture. Section~\ref{sec:steady-state} will study the open-loop steady-state properties of the proposed architecture. The following two sections will numerically study the closed-loop transient and steady-state characteristics of the proposed controller. Section~\ref{sec:practice}, will verify the results by presenting the results of an experiment on a precision positioning system system and at last the paper concludes along with some tips for ongoing works. 

\section{Preliminaries}
This section will discuss the preliminaries of this study.
\subsection{Dynamics of Precision Motion Systems}
\label{sec:dynamics}
The first stage in precise control of a mechatronic system is to determine the dynamics of motion. A friction-less moving mass is the most basic mechatronic system. Its motion dynamics are represented by a double integrator. A DC motor or a voice-coil actuator are examples of such systems. In practice, the masses are usually constrained by springs and there is always some amounts of damping present, which creates a mass-spring-damper dynamics. Such dynamics in frequency domain has a constant spring line and a resonance peak in addition to the negative-sloped mass line.\\
Most of the precision motion setups are well-designed systems which can be modeled as mass-spring-damper systems or a cascade of them~\mbox{\cite{boeren2015iterative,schmidt2020design,8814888,rijlaarsdam2010frequency}}. Whether they are collocated or non-collocated systems, in practice, the cross-over frequency to control them is usually placed along the -2 slope mass line. Furthermore, the presence of integrator at lower frequencies, makes the overall open-loop frequency domain characteristics of precision motion systems to closely resemble a mass system.\\
This paper consists of an analytical analysis on steady-state properties of such systems and a numerical analysis of transient properties. Although the analytical steady-state analysis will be carried out for general motion plants, for the transient numerical analysis for the sake of generality and simplicity, a mass plant will be assumed. However, it will be shown in experimental results that the study hold for a mass-spring-damper system with higher frequency modes. 
\subsection{General Reset Controller}
The general form of reset controllers used in this study is as following:
\begin{align}
	\label{eq:reset}
	{{\sum }_{R}}:=\left\{ \begin{aligned}
		& {{{\dot{x}}}_{r}}(t)={{A_r}}{{x}_{r}}(t)+{{B_r}}e(t),&\text{if }e(t)\ne 0\\ 
		& {{x}_{r}}({{t}^{+}})={{A}_{\rho }}{{x}_{r}}(t),&\text{if }e(t)=0 \\ 
		& u(t)={{C_r}}{{x}_{r}}(t)+{{D_r}}e(t) \\ 
	\end{aligned} \right.
\end{align}
where $A_r,B_r,C_r,D_r$ denote the state space matrices of the Base Linear System (BLS) and reset matrix is denoted by $A_\rho=\text{diag}(\gamma_1,...,\gamma_n)$ which contains the reset coefficients for each state. $e(t)$ and $ u(t) $ represent the input and output for the reset controller, respectively.\\
A special type of reset elements which is of concern in this paper is First Order Reset Element (FORE). In the literature, this element is typically shown as $\cancelto{\gamma}{\frac{1}{{s}/{{{\omega }_{r }}+1}\;}}$, where $\omega_r$ is the corner frequency and the arrow indicates the resetting action and since element has only one resetting state, $A_\rho=\gamma$.
\subsection{$H_\beta$ condition}
Among different criteria for stability of reset control systems~\mbox{\cite{banos2011reset,Guo:2015,polenkova2012stability,nevsic2008stability,vettori2014geometric,prieur2018analysis}}, despite of its conservativity, $H_\beta$ condition has gained attention because of simplicity and frequency domain applicability~\mbox{\cite{beker2004fundamental}}. In~\mbox{~\cite{dastjerdi2021newstability}}, the $H_\beta$ condition has been reformulated such that the frequency response functions of the controllers and the plant can be used directly. This method especially includes the case where the reset element is not the first element in the loop.\\ 
\begin{thm}
	\label{thm:h_beta}
	Let us denote frequency response functions of the open-loop BLS and the reset element as $O(j\omega)$ and $C_R
	(j\omega)$, respectively. And let the vector $\overrightarrow{\mathcal{N} }(\omega)\in {\mathbb{R}^{2}}$ be defined as $\overrightarrow{\mathcal{N}}(\omega)=[\begin{array}{ll}
		\mathcal{N}_{X} & \mathcal{N}_{Y}	\end{array}]^T$ in which
	\begin{align}
		\begin{aligned}
			\mathcal{N}_{X}&=\mathfrak{R}\left(O(j \omega)  \kappa(j \omega)\right),\\
			\mathcal{N}_{Y}&=\mathfrak{R}\left( \kappa(j \omega)C_R(j \omega) \right),
		\end{aligned}			
	\end{align}
where $\kappa(j \omega)=1+O^*(j\omega)$, $O^*(j\omega)$ is the conjugate of $O(j\omega)$ and $\mathfrak{R}(.)$ stands for the real part of a complex number. Let
\begin{equation}
\theta_{1}=\min _{\omega \in \mathbb{R}^{+}} \angle \overrightarrow{\mathcal{N}}(\omega) \text { and } \theta_{2}=\max _{\omega \in \mathbb{R}^{+}} \angle \overrightarrow{\mathcal{N}}(\omega).
\end{equation}
Then the $h_\beta$ condition for a reset control system is satisfied and its response is Uniformly Bounded-Input Bounded-State (UBIBS) stable if
\begin{equation}
	\left(-\frac{\pi}{2}<\theta_{1}<\pi\right) \wedge\left(-\frac{\pi}{2}<\theta_{2}<\pi\right) \wedge\left(\theta_{2}-\theta_{1}<\pi\right).
\end{equation}
\end{thm}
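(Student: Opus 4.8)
The plan is to derive the stated frequency-domain inequalities from the classical $H_\beta$ passivity criterion of Beker et al., which I take as given: the reset system is UBIBS stable whenever there exist a positive scalar $\beta$ and a matrix $P=P^T\succ 0$ such that a certain transfer function $H_\beta$ is strictly positive real (SPR) and the reset map obeys $A_\rho^T P A_\rho - P\preceq 0$. The whole theorem then reduces to showing that the three inequalities on $\theta_1,\theta_2$ are exactly the condition under which such a $\beta$ (and a compatible $P$) can be found. Because the reset element need not be the first block in the loop, the first task is to write a state-space realization of the full interconnection in which the reset states are isolated, and to identify the transfer function $H_\beta$ whose SPR-ness the criterion demands. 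This is where the factor $\kappa(j\omega)=1+O^*(j\omega)$ enters: it is the return-difference-type weighting that the open-loop BLS $O$ imposes on the reset channel when the element sits downstream of other dynamics.

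Next I would invoke the KYP (positive-real) lemma to trade the SPR frequency-domain condition for the pointwise inequality $\mathfrak{R}\!\left(H_\beta(j\omega)\right)>0$ for all $\omega\ge 0$. The key computation, and the step I expect to be the main obstacle, is to show that, since $H_\beta$ is affine in the free parameter $\beta$, its real part separates as $\mathfrak{R}\!\left(H_\beta(j\omega)\right)=\mathcal{N}_{X}(\omega)+\beta\,\mathcal{N}_{Y}(\omega)$, with the two coefficients being precisely the components $\mathcal{N}_{X}=\mathfrak{R}(O\kappa)$ and $\mathcal{N}_{Y}=\mathfrak{R}(\kappa C_R)$ of the statement. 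Getting this identification right requires careful bookkeeping of the realization and of the conjugations hidden in $\kappa$, and it is the only place where the detailed structure of the loop really enters; everything afterwards is geometry.

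Once that identity is in hand, the requirement becomes: there exists $\beta>0$ such that the vector $(1,\beta)^T$ makes a strictly positive inner product with $\overrightarrow{\mathcal{N}}(\omega)=(\mathcal{N}_{X},\mathcal{N}_{Y})^T$ for every $\omega$. Writing $(1,\beta)^T$ in polar form as a direction $\phi=\arctan\beta$, the constraint $\beta>0$ pins this separating direction to the open first-quadrant cone $\phi\in(0,\pi/2)$ (the positivity of both coordinates is what ultimately forces $P\succ 0$ with the correct sign). Positivity of the inner product holds for all $\omega$ iff every angle in $[\theta_1,\theta_2]$ lies within $\pi/2$ of $\phi$, i.e. $[\theta_1,\theta_2]\subset(\phi-\pi/2,\phi+\pi/2)$; a direction $\phi\in(0,\pi/2)$ with this property exists precisely when $\theta_1>-\pi/2$, $\theta_2<\pi$, and $\theta_2-\theta_1<\pi$, which are exactly the three stated inequalities. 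Finally I would verify that the $P$ produced by KYP also satisfies the reset inequality $A_\rho^T P A_\rho - P\preceq 0$; for a diagonal $A_\rho=\mathrm{diag}(\gamma_i)$ with $|\gamma_i|\le 1$ this is immediate, which closes the implication from the angle condition to UBIBS stability.
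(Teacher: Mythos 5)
First, a point of reference: the paper itself gives no proof of this theorem. It is imported as a preliminary from the cited work on the reformulated $H_\beta$ condition (\cite{dastjerdi2021newstability}), so there is no in-paper argument to compare yours against; your proposal has to stand on its own. On its own terms, the overall shape is right --- classical $H_\beta$/SPR criterion, reduction to pointwise positivity of $\mathfrak{R}\left(H_\beta(j\omega)\right)$, then a separating-direction argument in the plane --- and the final geometric step is carried out correctly: a direction $\phi\in(0,\pi/2)$ with $[\theta_1,\theta_2]\subset(\phi-\pi/2,\phi+\pi/2)$ exists precisely when $\theta_1>-\pi/2$, $\theta_2<\pi$ and $\theta_2-\theta_1<\pi$.

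The genuine gap is that the one step carrying all of the content is the step you defer. The identity $\mathfrak{R}\left(H_\beta(j\omega)\right)=\mathcal{N}_X(\omega)+\beta\,\mathcal{N}_Y(\omega)$, with $\mathcal{N}_X=\mathfrak{R}(O\kappa)$, $\mathcal{N}_Y=\mathfrak{R}(\kappa C_R)$ and $\kappa=1+O^*$, \emph{is} the reformulation being claimed; until it is derived from an explicit realization of the loop in which the reset element sits behind other LTI blocks (the very case the theorem is meant to cover), the argument assumes the frequency-domain form it is supposed to establish. In particular, $\kappa$ does not appear as a ``return-difference weighting'' by fiat: one has $\mathcal{N}_X=\mathfrak{R}\left(O(1+O^*)\right)=\mathfrak{R}(O)+|O|^2$, i.e.\ the expression arises from multiplying through by the conjugate of $1+O$, and you must check that this rescaling by $|1+O(j\omega)|^2>0$ preserves the sign condition and that the base-linear closed loop is Hurwitz --- a standing assumption of the $H_\beta$ criterion that the three angle inequalities alone do not supply (the paper itself notes this requirement separately). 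Finally, your closing claim that $A_\rho^TPA_\rho-P\preceq 0$ is ``immediate'' for diagonal $A_\rho=\mathrm{diag}(\gamma_i)$ with $|\gamma_i|\le 1$ is false for a general $P\succ 0$: the relevant block has entries $(\gamma_i\gamma_j-1)P_{ij}$, which need not be negative semidefinite when $P$ is not diagonal (take two states with $\gamma_1=1$, $\gamma_2=-1$ and a strong off-diagonal coupling in $P$). It is immediate only when there is a single resetting state, so that the constrained block of $P$ is a scalar --- enough for the FORE and Clegg integrator used in this paper, but not for the theorem as stated with $A_\rho=\mathrm{diag}(\gamma_1,\dots,\gamma_n)$.
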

\subsection{Describing Functions}
Describing function analysis is the known approach in literature for approximation of frequency response of nonlinear systems like reset controllers~\cite{guo2009frequency}. However, the DF method only takes the first harmonic of Fourier series decomposition of the output into account and neglects the effects of the higher order harmonics. This simplification can be significantly inaccurate under certain circumstances~\cite{karbasizadeh2020benefiting}. The ``Higher Order Sinusoidal Input Describing Function'' (HOSIDF) method has been introduced in~\cite{nuij2006higher} to provide more accurate  information about the frequency response of nonlinear systems by investigation of higher-order harmonics of the Fourier series decomposition. In other words, in this method, the nonlinear element will be replaced by a virtual harmonic generator. This method was developed in~\cite{saikumar2021loop} for reset elements defined by Eq.~({\ref{eq:reset}}) as follows:
\begin{align}  \nonumber
	\label{eq:hosidf}
	& H_n(\omega)=\left\{ \begin{aligned}
		& C_r{{(j\omega I-A_r)}^{-1}}(I+j{{\Theta }}(\omega ))B_r+D_r,\quad n=1\\ 
		& C_r{{(j\omega nI-A_r)}^{-1}}j{{\Theta }}(\omega )B_r,\quad~~\qquad\text{odd }n> 2\\ 
		& 0,\qquad\qquad\qquad\qquad\qquad\qquad\qquad~\text{even }n\ge 2\\ 
	\end{aligned} \right. \\
	&\begin{aligned}
		& {{\Theta }}(\omega )=-\frac{2{{\omega }^{2}}}{\pi }\Delta (\omega )[{{\Gamma }}(\omega )-{{\Lambda }^{-1}}(\omega )] \\  
		& \Lambda (\omega )={{\omega }^{2}}I+{{A_r}^{2}} \\  
		& \Delta (\omega )=I+{{e}^{\frac{\pi }{\omega }A_r}} \\  
		& {{\Delta }_{\rho}}(\omega )=I+{{A}_{\rho}}{{e}^{\frac{\pi }{\omega }A_r}} \\  
		& {{\Gamma }}(\omega )={\Delta }_{\rho}^{-1}(\omega ){{A}_{\rho}}\Delta (\omega ){{\Lambda }^{-1}}(\omega ) \\
	\end{aligned} 
\end{align}
where $H_n(\omega)$ is the $n^{\text{th}}$ harmonic describing function for sinusoidal input with frequency of $\omega$. \\
\begin{figure}[t!]
	\centering
	\includegraphics[width=\textwidth]{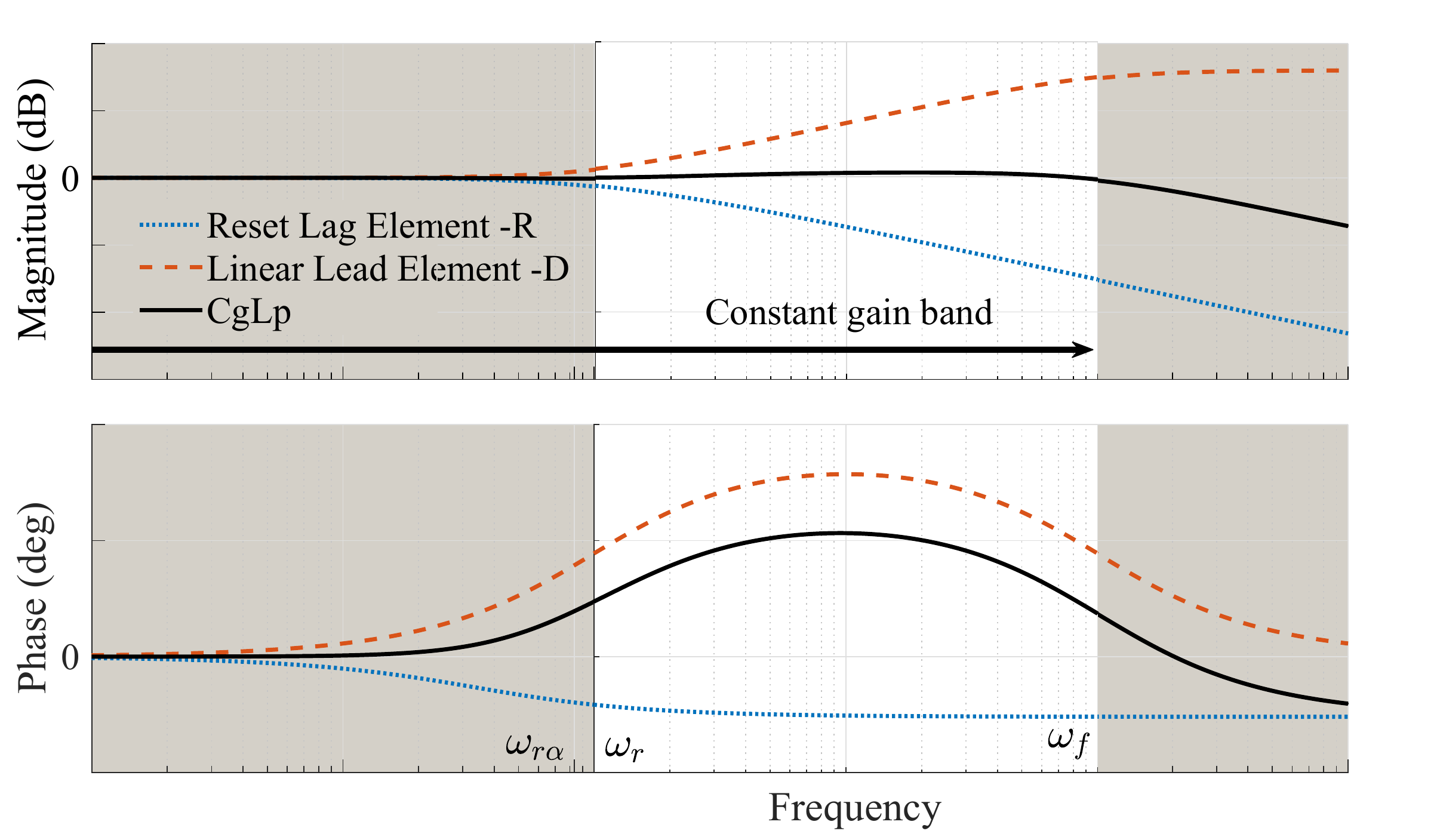}
	\caption{The concept of using combination of a reset lag and a linear lead element to form a CgLp element. The figure is  from~\cite{saikumar2019constant}.}
	\label{fig:cglp}
\end{figure}
\subsection{CgLp}
CgLp is a broadband phase compensation reset element which has a first harmonic constant gain behaviour while providing a phase lead~\cite{saikumar2019constant}. This element consists in a reset lag element in series with a linear lead filter, namely ${\sum}_R$ and $D$, respectively. For FORE CgLp:
\begin{align}
	\label{eq:fore}
	&{\sum}_R=\cancelto{\gamma}{\frac{1}{{s}/{{{\omega }_{r }}+1}\;}},&D(s)=\frac{{s}/{{{\omega }_{r\alpha}}}\;+1}{{s}/{{{\omega }_{f}}}\;+1}
\end{align}
where $\omega_{r\alpha}=\alpha \omega_r$, $\alpha$ is a tuning parameter accounting for a shift in corner frequency of the filter due to resetting action,  and $[\omega_{r},\omega_{f}]$ is the frequency range where the CgLp will provide the required phase lead. The arrow indicates the resetting action as described in Eq.~\eqref{eq:reset}.\\
CgLp provides the phase lead by using the reduced phase lag of reset lag element in combination with a corresponding lead element to create broadband phase lead. Ideally, the gain of the reset lag element should be canceled out by the gain of the corresponding linear lead element, which creates a constant gain behavior. The concept is depicted in Fig.~\ref{fig:cglp}.\\
\section{Proposed Architecture for Continuous Reset (CR) Elements}
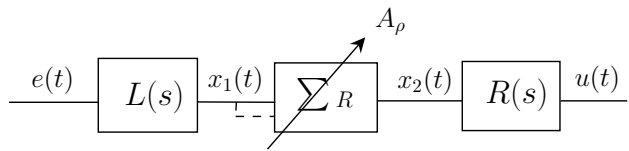
\begin{figure}[t!]
	\centering
	\resizebox{0.7\textwidth}{!}{

		\tikzset{every picture/.style={line width=0.75pt}} 
		
		\begin{tikzpicture}[x=0.75pt,y=0.75pt,yscale=-1,xscale=1]
			
			\draw  [line width=0.75]  (63,40) -- (133,40) -- (133,92.07) -- (63,92.07) -- cycle ;
			\draw  [line width=0.75]  (187.5,43.17) -- (260,43.17) -- (260,92) -- (187.5,92) -- cycle ;
			\draw [line width=0.75]    (182.5,104.08) -- (249.53,27.34) ;
			\draw [shift={(251.5,25.08)}, rotate = 491.13] [fill={rgb, 255:red, 0; green, 0; blue, 0 }  ][line width=0.08]  [draw opacity=0] (10.72,-5.15) -- (0,0) -- (10.72,5.15) -- (7.12,0) -- cycle    ;
			\draw [line width=0.75]    (0,71) -- (63,71) ;
			\draw [line width=0.75]    (133,70.75) -- (188,71) ;
			\draw  [line width=0.75]  (320,42.17) -- (389,42.17) -- (389,91) -- (320,91) -- cycle ;
			\draw [line width=0.75]    (260,70) -- (320,70) ;
			\draw [line width=0.75]    (389.5,70) -- (436.5,70) ;
			\draw  [dash pattern={on 4.5pt off 4.5pt}]  (160.5,70.87) -- (160,81) ;
			\draw  [dash pattern={on 4.5pt off 4.5pt}]  (160,81) -- (187,81) ;
			
			\draw (415.5,54.5) node  [font=\Large]  {$u( t)$};
			\draw (31,55.5) node  [font=\Large]  {$e( t)$};
			\draw (257,3.4) node [anchor=north west][inner sep=0.75pt]  [font=\Large]  {$A_{\rho }$};
			\draw (80,54.15) node [anchor=north west][inner sep=0.75pt]  [font=\LARGE]  {$L( s)$};
			\draw (201,52.4) node [anchor=north west][inner sep=0.75pt]  [font=\LARGE]  {$\sum{}_{R}$};
			\draw (335,52.4) node [anchor=north west][inner sep=0.75pt]  [font=\LARGE]  {$R( s)$};
			\draw (160,55.5) node  [font=\Large]  {$x_{1}( t)$};
			\draw (294,56) node  [font=\Large]  {$x_{2}( t)$};

		\end{tikzpicture}
	}
	\caption{Proposed architecture for reset elements which includes a lead element, $L(s)$ before the reset element and ints inverse after the reset element. The proposed lead is $L(s)=\frac{s/\omega_l+1}{s/\omega_h+1}$ and $R(s)=\frac{1}{s/\omega_l+1}$.}
	\label{fig:new_arch}
\end{figure}
The new architecture which this paper proposes consists of adding a first-order lag element, $R(s)$, after the reset element and adding the inverse of it, which is basically a lead element, after the reset element. Fig.~\ref{fig:new_arch} depicts the new architecture in which
\begin{equation}
	L(s)=\frac{s/\omega_l+1}{s/\omega_h+1},\quad R(s)=\frac{1}{s/\omega_l+1}.
	\label{eq:L}
\end{equation}
In the ideal case, $L(s)=R^{-1}(s)$, however, in order to make $L(s)$ proper and realizable, the presence of the denominator in $L(s)$ is necessary. Nevertheless, assuming $\omega_h$ is large enough, $R(s)\approx L^{-1}(s)$ in low frequencies. In the context of linear control systems, adding these two elements would almost have no effect on the output of the system in lower frequencies and improve the noise attenuation behaviour at higher frequencies, provided the internal states stability. However, in the context of nonlinear control systems, the output of the system will be changed significantly. \\
In this new architecture the resetting condition is changed from $e(t)=0$ to $x_1(t)=0$. Again considering that $\omega_h$ is large enough, the new resetting condition can be approximated as  
\begin{equation}\label{eq:reset_law}
	x_1(t)={\dot{e}(t)}/\omega_l+e(t)=0.
\end{equation}
The new reset element resets based on a linear combination of $e(t)$ and $\dot{e}(t)$, where $\omega_l$ determines the weight of each. In closed loop, $e(t)$ and $\dot{e}(t)$ are the error and its differentiation.\\
\begin{rem}
	\label{rem:stability}
	According to Theorem~\mbox{\ref{thm:h_beta}}, a reset element in CR architecture has the same stability properties as standing alone, as long as $O(s)$ stays the same, i.e., $R(s)$ and $L(s)$ cancel each other in linear domain. In other words, adding $L(s)$ and $R(s)$ in CR architecture, does not affect the stability properties of the reset control system. 
	However, for the architecture presented in this paper, the additional condition is $\omega_h \gg \omega_{r}$ and  $\omega_h \gg \omega_{c}$, where $\omega_c$ is the cross-over frequency.
\end{rem}
\begin{thm}
	\label{thm:cont}
	The output of the proposed architecture is continuous as opposed to  ${\sum{}}_R$ alone.
\end{thm}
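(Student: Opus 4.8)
The plan is to realize the entire chain of Fig.~\ref{fig:new_arch} as a single reset (hybrid) state-space system driven by the input $e(t)$, and then to track precisely which coordinates jump at a reset instant and how they enter the output map. I would assign a state $x_L$ to the lead filter $L(s)$, the state $x_r$ to the base-linear part of $\sum_R$, and a scalar state $\xi$ to the lag filter $R(s)$, so that the reset-block input is $x_1=C_L x_L + D_L e$, its output is $x_2 = C_r x_r + D_r x_1$, and the overall output $u$ is produced by $R(s)$ from $x_2$.

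The structural fact on which everything rests is that $R(s)=\frac{1}{s/\omega_l+1}=\frac{\omega_l}{s+\omega_l}$ is \emph{strictly proper}. Taking the realization $\dot{\xi}=-\omega_l\xi+\omega_l x_2$ with $u=\xi$, the output carries \emph{no feedthrough term}: $u$ is an algebraic function of the filter state $\xi$ alone and never depends instantaneously on the reset-block output $x_2$. I would contrast this immediately with $\sum_R$ standing alone, whose output equals $C_r x_r + D_r x_1$ and therefore inherits any jump in $x_r$ directly through $C_r$.

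Next I would write down the jump map of the augmented system. By construction the resetting condition $x_1(t)=0$ triggers $x_r(t^+)=A_\rho x_r(t)$, while the two linear filters evolve under their own dynamics and are \emph{not} reset; in particular $\xi(t^+)=\xi(t)$. Hence, for a continuous input $e$ (so that $x_1$ is continuous and $x_2$ is bounded and piecewise continuous, with jumps only at reset instants), the reset-block output $x_2$ jumps at a reset instant $t_k$ by $C_r(A_\rho-I)\,x_r(t_k)$---generically nonzero, since $x_1(t_k)=0$ kills the feedthrough contribution---but this discontinuous signal enters only as the \emph{input} of $R(s)$, never as a direct contribution to $u$.

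The continuity conclusion then follows in two pieces. Between consecutive reset instants every coordinate obeys a linear ODE and is smooth, so $u=\xi$ is continuous there. Across a reset instant $\xi$ is continuous because it lies outside the image of the jump map; equivalently, from $\xi(t)=e^{-\omega_l t}\xi(0)+\int_0^{t}e^{-\omega_l(t-\tau)}\,\omega_l\, x_2(\tau)\,d\tau$ one reads off that $\xi$ is the convolution of the bounded, piecewise-continuous signal $x_2$ with a continuous kernel and is therefore itself continuous even though $x_2$ is not. Since $u=\xi$, the output $u$ is continuous for all $t$, in contrast to $\sum_R$ alone. The single point that makes or breaks the argument is the bookkeeping in the first two steps: one must set up the augmented realization so that it is simultaneously transparent that (i) the jump map touches only $x_r$, leaving the filter state $\xi$ intact, and (ii) $R(s)$ supplies no direct path from $x_2$ to $u$. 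Granting those two facts, the genuine jump in $x_2$ is smoothed into a merely continuous (though generally non-differentiable) response in $u$, and the continuity estimate itself is routine.
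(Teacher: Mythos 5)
Your argument is correct and is essentially the paper's own: the paper likewise isolates the discontinuity in $x_2$ at the reset instants and then shows, via the variation-of-constants (convolution) representation of the strictly proper lag $R(s)$, that the left and right limits of $u$ at each $t_k$ coincide. Your additional bookkeeping --- making explicit that the jump map acts only on $x_r$ and that $R(s)$ has no feedthrough --- is a slightly more careful statement of the same mechanism, not a different route.
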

\begin{proof}
	If the reset instants are $\{t_k \mid k=1,2,3,\cdots\}$, from Eq.~\eqref{eq:reset} and Fig.~\ref{fig:new_arch}, it can be seen that 
	\begin{align}
		\label{eq:reset_x1_x2}
		{{\sum }_{R}}:=\left\{ \begin{aligned}
			& {{{\dot{x}}}_{r}}(t)={{A_r}}{{x}_{r}}(t)+{{B_r}}x_1(t),&\text{if }t\ne t_k\\ 
			& {{x}_{r}}({{t}^{+}})={{A}_{\rho }}{{x}_{r}}(t),&\text{if }t=t_k \\ 
			& x_2(t)={{C_r}}{{x}_{r}}(t)+{{D_r}}x_1(t) \\ 
		\end{aligned} \right.
	\end{align}
	It is readily obvious that $x_2(t)$ is continuous on $(t_{k-1},t_k)$ and $(t_k,t_{k+1})$. However,
	\begin{equation}
		\underset{t\to t_{k}^{-}}{\mathop{\lim }}\,{{x}_{2}}(t)\ne \underset{t\to t_{k}^{+}}{\mathop{\lim }}\,{{x}_{2}}(t)
	\end{equation}
	and thus it is discontinuous. Nevertheless, for $u(t)$ one can write
	\begin{align}
		\nonumber
		&\underset{{{t}_{k-1}}<t\leq{{t}_{k}}}{\mathop{u(t)}}\,=\\
		&{{\omega }_{l}}\left( {{e}^{-{{\omega }_{l}}(t-{{t}_{k-1}})}}u({{t}_{k-1}})+\int_{{{t}_{k-1}}}^{t}{{{e}^{-{{\omega }_{l}}(t-\tau )}}{{x}_{2}}(\tau )d\tau } \right).
	\end{align}
	It can be readily seen that
	\begin{align}
		\nonumber
		&\underset{t\to t_{k}^{-}}{\mathop{\lim }}\,{{u}}(t)= \underset{t\to t_{k}^{+}}{\mathop{\lim }}\,{{u}}(t)=\\
		&{{\omega }_{d}}\left( {{e}^{-{{\omega }_{l}}(t_k-{{t}_{k-1}})}}u({{t}_{k-1}})+\int_{{{t}_{k-1}}}^{t_k}{{{e}^{-{{\omega }_{l}}(t_k-\tau )}}{{x}_{2}}(\tau )d\tau } \right).
	\end{align}
\end{proof}
In addition to making the reset element output continuous, other motivations to use this architecture can be described in terms of steady-state and transient response of system, which will be discussed in details in following sections. 
\section{Open-Loop Steady-State Properties of the CR Architecture}
\label{sec:steady-state}
Frequency domain analysis is the popular approach for study of the steady-state response of a system. However, as mentioned earlier, because of the nonlinearity of reset elements, that is not directly possible. The DF and HOSIDF methods are two approaches to approximate a frequency response for a reset  control systems, where DF can be regarded as a special case of HOSIDF in which, only the first-order harmonic is studied. In order to illustrate how the HOSIDF approach can be used for the CR  architecture proposed, one can refer to Fig.~\ref{fig:HOSIDF}. \\
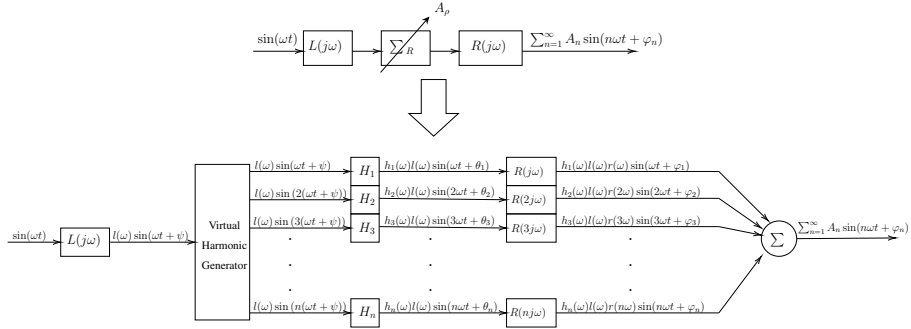
\begin{figure}[t!]
	\centering
	\resizebox{\textwidth}{!}{

		\tikzset{every picture/.style={line width=0.75pt}} 
		
		\begin{tikzpicture}[x=0.75pt,y=0.75pt,yscale=-1,xscale=1]
			
			\draw    (352,80) -- (420,80) ;
			\draw [shift={(422,80)}, rotate = 180] [color={rgb, 255:red, 0; green, 0; blue, 0 }  ][line width=0.75]    (10.93,-3.29) .. controls (6.95,-1.4) and (3.31,-0.3) .. (0,0) .. controls (3.31,0.3) and (6.95,1.4) .. (10.93,3.29)   ;
			\draw    (492,80) -- (530,80) ;
			\draw [shift={(532,80)}, rotate = 180] [color={rgb, 255:red, 0; green, 0; blue, 0 }  ][line width=0.75]    (10.93,-3.29) .. controls (6.95,-1.4) and (3.31,-0.3) .. (0,0) .. controls (3.31,0.3) and (6.95,1.4) .. (10.93,3.29)   ;
			\draw  [line width=0.75]  (533,51.17) -- (602,51.17) -- (602,100) -- (533,100) -- cycle ;
			\draw [line width=0.75]    (532,110) -- (599.03,33.26) ;
			\draw [shift={(601,31)}, rotate = 491.13] [fill={rgb, 255:red, 0; green, 0; blue, 0 }  ][line width=0.08]  [draw opacity=0] (10.72,-5.15) -- (0,0) -- (10.72,5.15) -- (7.12,0) -- cycle    ;
			\draw  [line width=0.75]  (423,51.17) -- (492,51.17) -- (492,100) -- (423,100) -- cycle ;
			\draw  [line width=0.75]  (643,51.17) -- (732,51.17) -- (732,100) -- (643,100) -- cycle ;
			\draw    (602,80) -- (640,80) ;
			\draw [shift={(642,80)}, rotate = 180] [color={rgb, 255:red, 0; green, 0; blue, 0 }  ][line width=0.75]    (10.93,-3.29) .. controls (6.95,-1.4) and (3.31,-0.3) .. (0,0) .. controls (3.31,0.3) and (6.95,1.4) .. (10.93,3.29)   ;
			\draw    (732,80) -- (890,80) ;
			\draw [shift={(892,80)}, rotate = 180] [color={rgb, 255:red, 0; green, 0; blue, 0 }  ][line width=0.75]    (10.93,-3.29) .. controls (6.95,-1.4) and (3.31,-0.3) .. (0,0) .. controls (3.31,0.3) and (6.95,1.4) .. (10.93,3.29)   ;
			\draw    (5,350) -- (78,350) ;
			\draw [shift={(80,350)}, rotate = 180] [color={rgb, 255:red, 0; green, 0; blue, 0 }  ][line width=0.75]    (10.93,-3.29) .. controls (6.95,-1.4) and (3.31,-0.3) .. (0,0) .. controls (3.31,0.3) and (6.95,1.4) .. (10.93,3.29)   ;
			\draw   (270,240) -- (347,240) -- (347,460) -- (270,460) -- cycle ;
			\draw    (347,250) -- (488,250) ;
			\draw [shift={(490,250)}, rotate = 180] [color={rgb, 255:red, 0; green, 0; blue, 0 }  ][line width=0.75]    (10.93,-3.29) .. controls (6.95,-1.4) and (3.31,-0.3) .. (0,0) .. controls (3.31,0.3) and (6.95,1.4) .. (10.93,3.29)   ;
			\draw    (347,290) -- (444.6,290) -- (488,290) ;
			\draw [shift={(490,290)}, rotate = 180] [color={rgb, 255:red, 0; green, 0; blue, 0 }  ][line width=0.75]    (10.93,-3.29) .. controls (6.95,-1.4) and (3.31,-0.3) .. (0,0) .. controls (3.31,0.3) and (6.95,1.4) .. (10.93,3.29)   ;
			\draw    (347,330) -- (488,330) ;
			\draw [shift={(490,330)}, rotate = 180] [color={rgb, 255:red, 0; green, 0; blue, 0 }  ][line width=0.75]    (10.93,-3.29) .. controls (6.95,-1.4) and (3.31,-0.3) .. (0,0) .. controls (3.31,0.3) and (6.95,1.4) .. (10.93,3.29)   ;
			\draw    (347,450) -- (488,450) ;
			\draw [shift={(490,450)}, rotate = 180] [color={rgb, 255:red, 0; green, 0; blue, 0 }  ][line width=0.75]    (10.93,-3.29) .. controls (6.95,-1.4) and (3.31,-0.3) .. (0,0) .. controls (3.31,0.3) and (6.95,1.4) .. (10.93,3.29)   ;
			\draw   (490,230) -- (530,230) -- (530,270) -- (490,270) -- cycle ;
			\draw   (490,270) -- (530,270) -- (530,310) -- (490,310) -- cycle ;
			\draw   (490,310) -- (530,310) -- (530,350) -- (490,350) -- cycle ;
			\draw   (490,430) -- (530,430) -- (530,470) -- (490,470) -- cycle ;
			\draw    (530,250) -- (708,250) ;
			\draw [shift={(710,250)}, rotate = 180] [color={rgb, 255:red, 0; green, 0; blue, 0 }  ][line width=0.75]    (10.93,-3.29) .. controls (6.95,-1.4) and (3.31,-0.3) .. (0,0) .. controls (3.31,0.3) and (6.95,1.4) .. (10.93,3.29)   ;
			\draw    (530,288) -- (708,289.98) ;
			\draw [shift={(710,290)}, rotate = 180.64] [color={rgb, 255:red, 0; green, 0; blue, 0 }  ][line width=0.75]    (10.93,-3.29) .. controls (6.95,-1.4) and (3.31,-0.3) .. (0,0) .. controls (3.31,0.3) and (6.95,1.4) .. (10.93,3.29)   ;
			\draw    (530,328) -- (708,329.98) ;
			\draw [shift={(710,330)}, rotate = 180.64] [color={rgb, 255:red, 0; green, 0; blue, 0 }  ][line width=0.75]    (10.93,-3.29) .. controls (6.95,-1.4) and (3.31,-0.3) .. (0,0) .. controls (3.31,0.3) and (6.95,1.4) .. (10.93,3.29)   ;
			\draw    (530,450) -- (708,450) ;
			\draw [shift={(710,450)}, rotate = 180] [color={rgb, 255:red, 0; green, 0; blue, 0 }  ][line width=0.75]    (10.93,-3.29) .. controls (6.95,-1.4) and (3.31,-0.3) .. (0,0) .. controls (3.31,0.3) and (6.95,1.4) .. (10.93,3.29)   ;
			\draw  [line width=0.75]  (80,330) -- (149,330) -- (149,370) -- (80,370) -- cycle ;
			\draw    (150,350) -- (268,350) ;
			\draw [shift={(270,350)}, rotate = 180] [color={rgb, 255:red, 0; green, 0; blue, 0 }  ][line width=0.75]    (10.93,-3.29) .. controls (6.95,-1.4) and (3.31,-0.3) .. (0,0) .. controls (3.31,0.3) and (6.95,1.4) .. (10.93,3.29)   ;
			\draw   (710,230) -- (780,230) -- (780,270) -- (710,270) -- cycle ;
			\draw   (710,270) -- (780,270) -- (780,310) -- (710,310) -- cycle ;
			\draw   (710,310) -- (780,310) -- (780,350) -- (710,350) -- cycle ;
			\draw   (710,430) -- (780,430) -- (780,470) -- (710,470) -- cycle ;
			\draw    (780,250) -- (1010,250) ;
			\draw    (780,288) -- (1010,288) ;
			\draw    (780,328) -- (1010,328) ;
			\draw    (780,450) -- (1010,450) ;
			\draw   (1070,345) .. controls (1070,331.19) and (1081.19,320) .. (1095,320) .. controls (1108.81,320) and (1120,331.19) .. (1120,345) .. controls (1120,358.81) and (1108.81,370) .. (1095,370) .. controls (1081.19,370) and (1070,358.81) .. (1070,345) -- cycle ;
			\draw    (1010,250) -- (1078.59,318.59) ;
			\draw [shift={(1080,320)}, rotate = 225] [color={rgb, 255:red, 0; green, 0; blue, 0 }  ][line width=0.75]    (10.93,-3.29) .. controls (6.95,-1.4) and (3.31,-0.3) .. (0,0) .. controls (3.31,0.3) and (6.95,1.4) .. (10.93,3.29)   ;
			\draw    (1010,288) -- (1068.36,328.85) ;
			\draw [shift={(1070,330)}, rotate = 214.99] [color={rgb, 255:red, 0; green, 0; blue, 0 }  ][line width=0.75]    (10.93,-3.29) .. controls (6.95,-1.4) and (3.31,-0.3) .. (0,0) .. controls (3.31,0.3) and (6.95,1.4) .. (10.93,3.29)   ;
			\draw    (1010,328) -- (1068.04,339.61) ;
			\draw [shift={(1070,340)}, rotate = 191.31] [color={rgb, 255:red, 0; green, 0; blue, 0 }  ][line width=0.75]    (10.93,-3.29) .. controls (6.95,-1.4) and (3.31,-0.3) .. (0,0) .. controls (3.31,0.3) and (6.95,1.4) .. (10.93,3.29)   ;
			\draw    (1010,450) -- (1068.8,371.6) ;
			\draw [shift={(1070,370)}, rotate = 486.87] [color={rgb, 255:red, 0; green, 0; blue, 0 }  ][line width=0.75]    (10.93,-3.29) .. controls (6.95,-1.4) and (3.31,-0.3) .. (0,0) .. controls (3.31,0.3) and (6.95,1.4) .. (10.93,3.29)   ;
			\draw    (1120,345) -- (1258,344.01) ;
			\draw [shift={(1260,344)}, rotate = 539.5899999999999] [color={rgb, 255:red, 0; green, 0; blue, 0 }  ][line width=0.75]    (10.93,-3.29) .. controls (6.95,-1.4) and (3.31,-0.3) .. (0,0) .. controls (3.31,0.3) and (6.95,1.4) .. (10.93,3.29)   ;
			\draw   (572,168) -- (589.5,168) -- (589.5,120) -- (624.5,120) -- (624.5,168) -- (642,168) -- (607,200) -- cycle ;
			
			\draw (357,52.4) node [anchor=north west][inner sep=0.75pt]   [font=\Large] {$\sin( \omega t)$};
			\draw (428,62.4) node [anchor=north west][inner sep=0.75pt]  [font=\Large]  {$L( j\omega )$};
			\draw (606.5,9.32) node [anchor=north west][inner sep=0.75pt]  [font=\Large]  {$A_{\rho }$};
			\draw (543,62.4) node [anchor=north west][inner sep=0.75pt]  [font=\Large]  {${\sum{}}_{R}$};
			\draw (658,62.4) node [anchor=north west][inner sep=0.75pt]  [font=\Large]  {$R( j\omega )$};
			\draw (742,52.4) node [anchor=north west][inner sep=0.75pt]  [font=\Large]  {$\sum _{n=1}^{\infty } A_{n}\sin( n\omega t+\varphi _{n})$};
			\draw (11,332.4) node [anchor=north west][inner sep=0.75pt]   [font=\large] {$\sin( \omega t)$};
			\draw (278,319) node [anchor=north west][inner sep=0.75pt]  [font=\large] [align=center] {{\fontfamily{ptm}\selectfont Virtual}\\{\fontfamily{ptm}\selectfont Harmonic }\\{\fontfamily{ptm}\selectfont Generator}};
			\draw (353,232.4) node [anchor=north west][inner sep=0.75pt]  [font=\large]  {$l( \omega )\sin( \omega t+\psi )$};
			\draw (353,272.4) node [anchor=north west][inner sep=0.75pt]  [font=\large]  {$l( \omega )\sin\left( 2( \omega t+\psi )\right)$};
			\draw (400,343) node [anchor=north west][inner sep=0.75pt]  [font=\Large] [align=left] {\textbf{.}\\\textbf{.}\\\textbf{.}};
			\draw (500,240.4) node [anchor=north west][inner sep=0.75pt]  [font=\Large]  {$H_{1}$};
			\draw (500,280.4) node [anchor=north west][inner sep=0.75pt]   [font=\Large]   {$H_{2}$};
			\draw (500,320.4) node [anchor=north west][inner sep=0.75pt]   [font=\Large]   {$H_{3}$};
			\draw (500,440.4) node [anchor=north west][inner sep=0.75pt]   [font=\Large]   {$H_{n}$};
			\draw (536,232.4) node [anchor=north west][inner sep=0.75pt]   [font=\large]  {$h_{1}( \omega ) l( \omega )\sin( \omega t+\theta _{1})$};
			\draw (536,270.4) node [anchor=north west][inner sep=0.75pt]   [font=\large]  {$h_{2}( \omega ) l( \omega )\sin( 2\omega t+\theta _{2})$};
			\draw (536,310.4) node [anchor=north west][inner sep=0.75pt]  [font=\large]   {$h_{3}( \omega ) l( \omega )\sin( 3\omega t+\theta _{3})$};
			\draw (536,432.4) node [anchor=north west][inner sep=0.75pt]   [font=\large]  {$h_{n}( \omega ) l( \omega )\sin( n\omega t+\theta _{n})$};
			\draw (87,336.4) node [anchor=north west][inner sep=0.75pt]  [font=\Large]  {$L( j\omega )$};
			\draw (152,332.4) node [anchor=north west][inner sep=0.75pt]   [font=\large]  {$l( \omega )\sin( \omega t+\psi )$};
			\draw (598,343) node [anchor=north west][inner sep=0.75pt]  [font=\Large] [align=left] {\textbf{.}\\\textbf{.}\\\textbf{.}};
			\draw (719,241.4) node [anchor=north west][inner sep=0.75pt]   [font=\large]  {$R( j\omega )$};
			\draw (715,281.4) node [anchor=north west][inner sep=0.75pt]   [font=\large]  {$R( 2j\omega )$};
			\draw (715,321.4) node [anchor=north west][inner sep=0.75pt]   [font=\large]  {$R( 3j\omega )$};
			\draw (713,441.4) node [anchor=north west][inner sep=0.75pt]   [font=\large]  {$R( nj\omega )$};
			\draw (786,232.4) node [anchor=north west][inner sep=0.75pt]    [font=\large] {$ { h_{1}( \omega ) l( \omega )} { r( \omega )} \sin( \omega t+\varphi _{1})$};
			\draw (786,270.4) node [anchor=north west][inner sep=0.75pt]   [font=\large]  {$  {h_{2}( \omega ) l( \omega ) } {r( 2\omega )} \sin( 2\omega t+\varphi _{2})$};
			\draw (786,310.4) node [anchor=north west][inner sep=0.75pt]   [font=\large]  {$ { h_{3}( \omega ) l( \omega )} { r( 3\omega )} \sin( 3\omega t+\varphi _{3})$};
			\draw (786,432.4) node [anchor=north west][inner sep=0.75pt]   [font=\large]  {$  {h_{n}( \omega ) l( \omega )}{ r( n\omega )} \sin( n\omega t+\varphi _{n})$};
			\draw (1081,335) node [anchor=north west][inner sep=0.75pt] [font=\Large]   {$\sum $};
			\draw (881,343) node [anchor=north west][inner sep=0.75pt]  [font=\Large] [align=left] {\textbf{.}\\\textbf{.}\\\textbf{.}};
			\draw (1120,320) node [anchor=north west][inner sep=0.75pt]   [font=\large]  {$\sum _{n=1}^{\infty } A_{n}\sin( n\omega t+\varphi _{n})$};
			\draw (351,312.4) node [anchor=north west][inner sep=0.75pt]   [font=\large]  {$l( \omega )\sin\left( 3( \omega t+\psi )\right)$};
			\draw (351,432.4) node [anchor=north west][inner sep=0.75pt]  [font=\large]   {$l( \omega )\sin\left( n( \omega t+\psi )\right)$};

		\end{tikzpicture}
	}
	\caption{Representation of HOSIDF for open-loop analysis of the new architecture proposed. $l(\omega)=\lvert L(j\omega) \rvert$, $r(\omega)=\lvert R(j\omega) \rvert$ and $h_n(\omega)=\lvert H_n(j\omega) \rvert$, where $H_n(\omega)$ can be obtained from Eq.~\eqref{eq:hosidf} for ${\sum{}}_R$.}
	\label{fig:HOSIDF}
\end{figure}
\begin{figure}[t!]
	\centering
	\begin{subfigure}{0.9\textwidth}
	\includegraphics[width=0.9\textwidth]{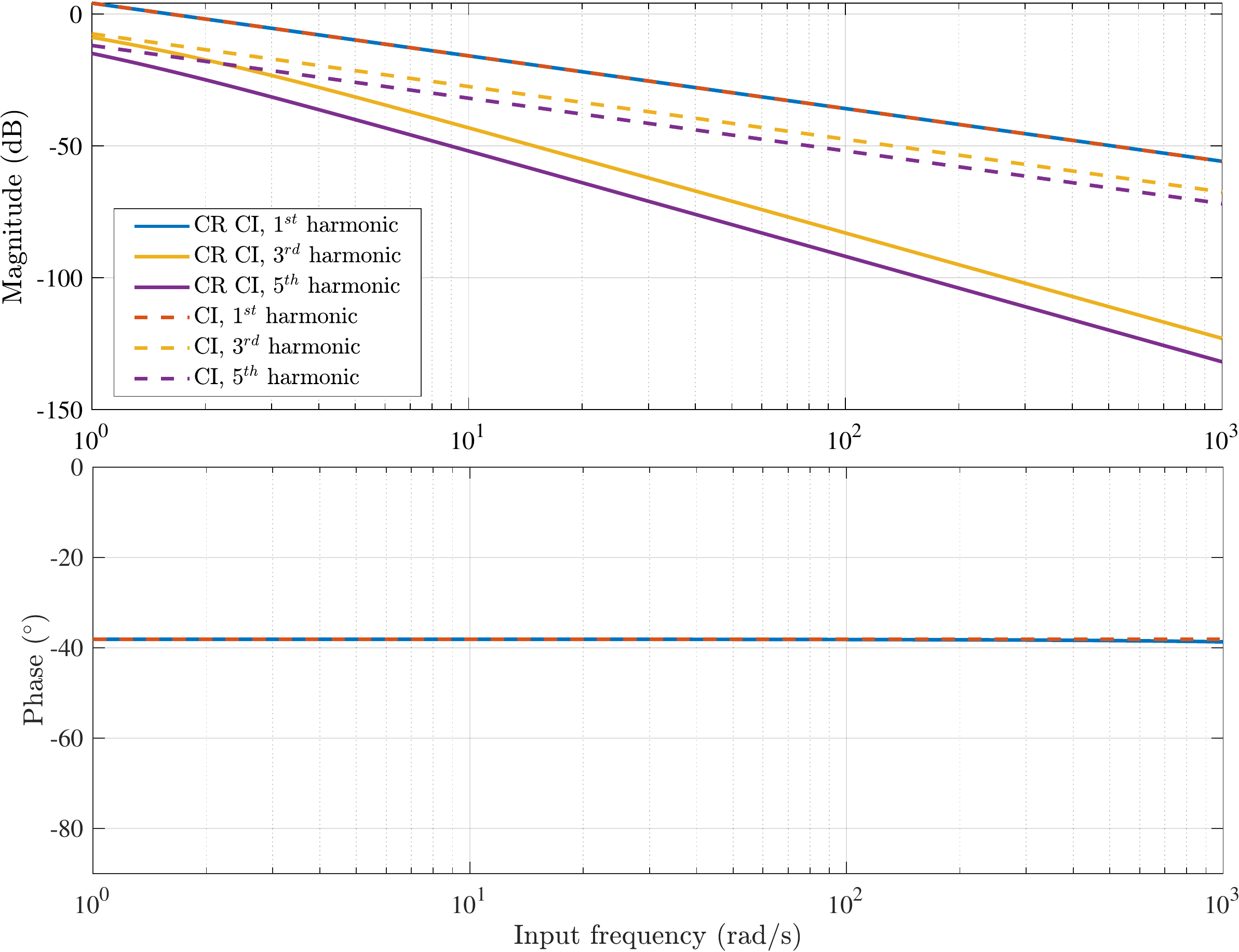}%
		\caption{HOSIDF of CI compared to CR CI. $\omega_l=10$, $\omega_h=1e4$ and $\gamma=0$. }
		\label{fig:clegg_hosidf}
	\end{subfigure}\\
\begin{subfigure}{0.9\textwidth}
	\includegraphics[width=0.9\textwidth]{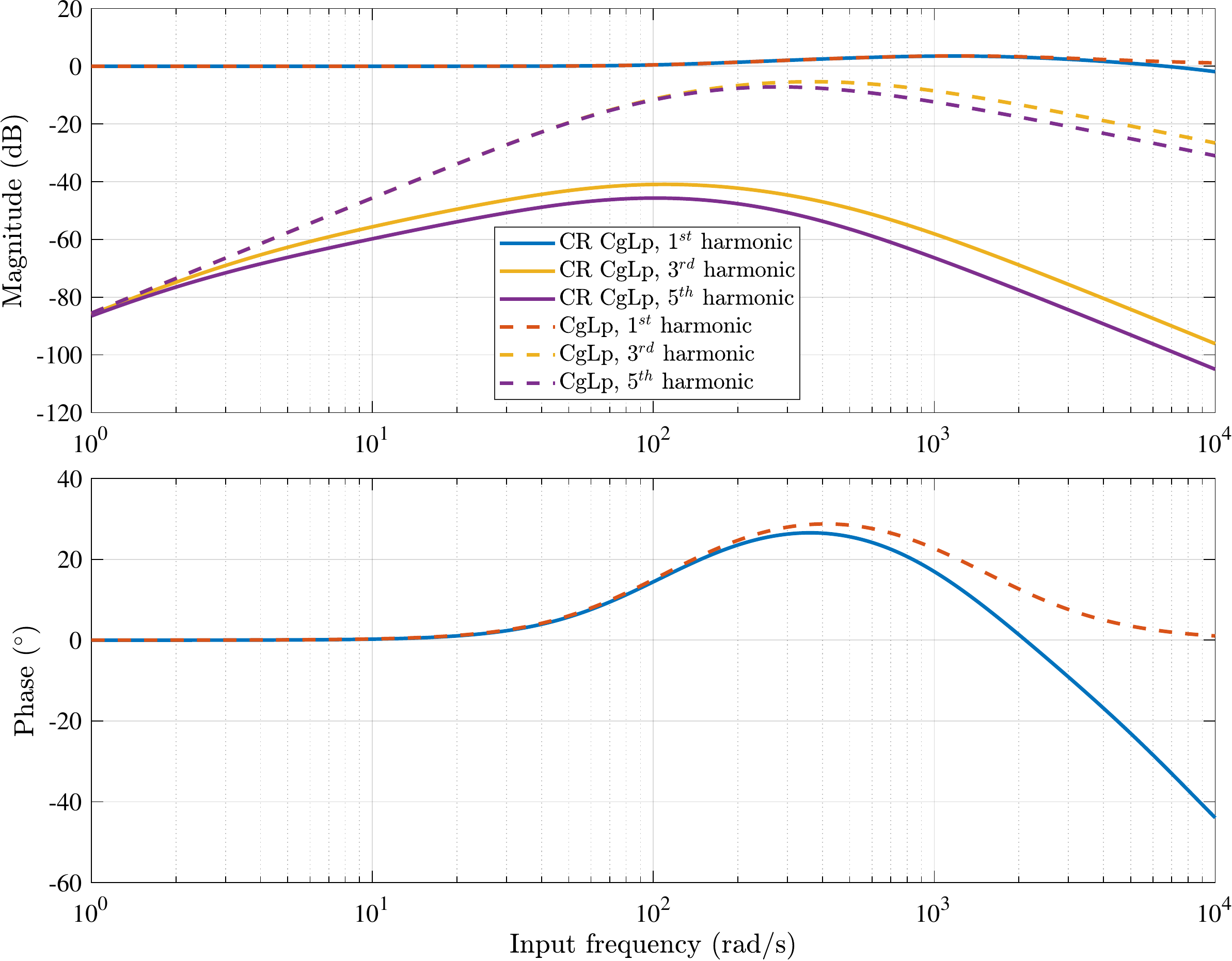}%
		\caption{HOSIDF of CgLp compared to CR CgLp. $\omega_l=10$, $\omega_h=1e4$, $\omega_r=100$, $\omega_f=1500$ and $\gamma=0.11$.}
		\label{fig:cglp_hosidf}
	\end{subfigure}
	\caption{HOSIDF of CI and CgLp compared to their CR architecture proposed in this paper.}
	\label{fig:hosidf_clegg_cplg}
\end{figure}   
\begin{figure}[!t]
	\centering
	\begin{subfigure}{\textwidth}
	\includegraphics[width=\textwidth]{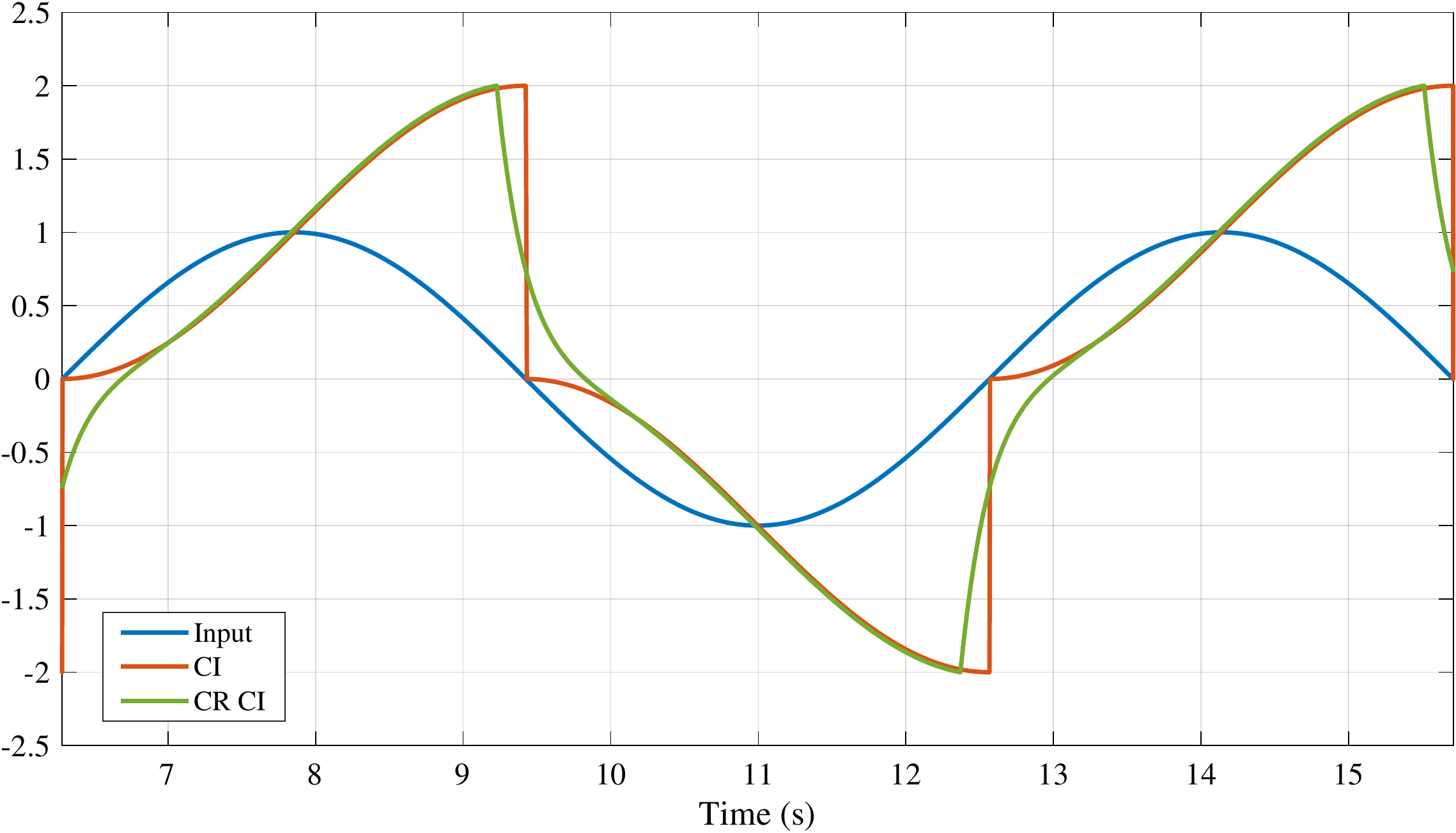}%
	\caption{Sinusoidal response of Clegg Integrator (CI) compared to CR CI. Input is $\sin(t)$.}
		\label{fig:clegg_sin}
	\end{subfigure}\\
	\begin{subfigure}{\textwidth}
		\includegraphics[width=\textwidth]{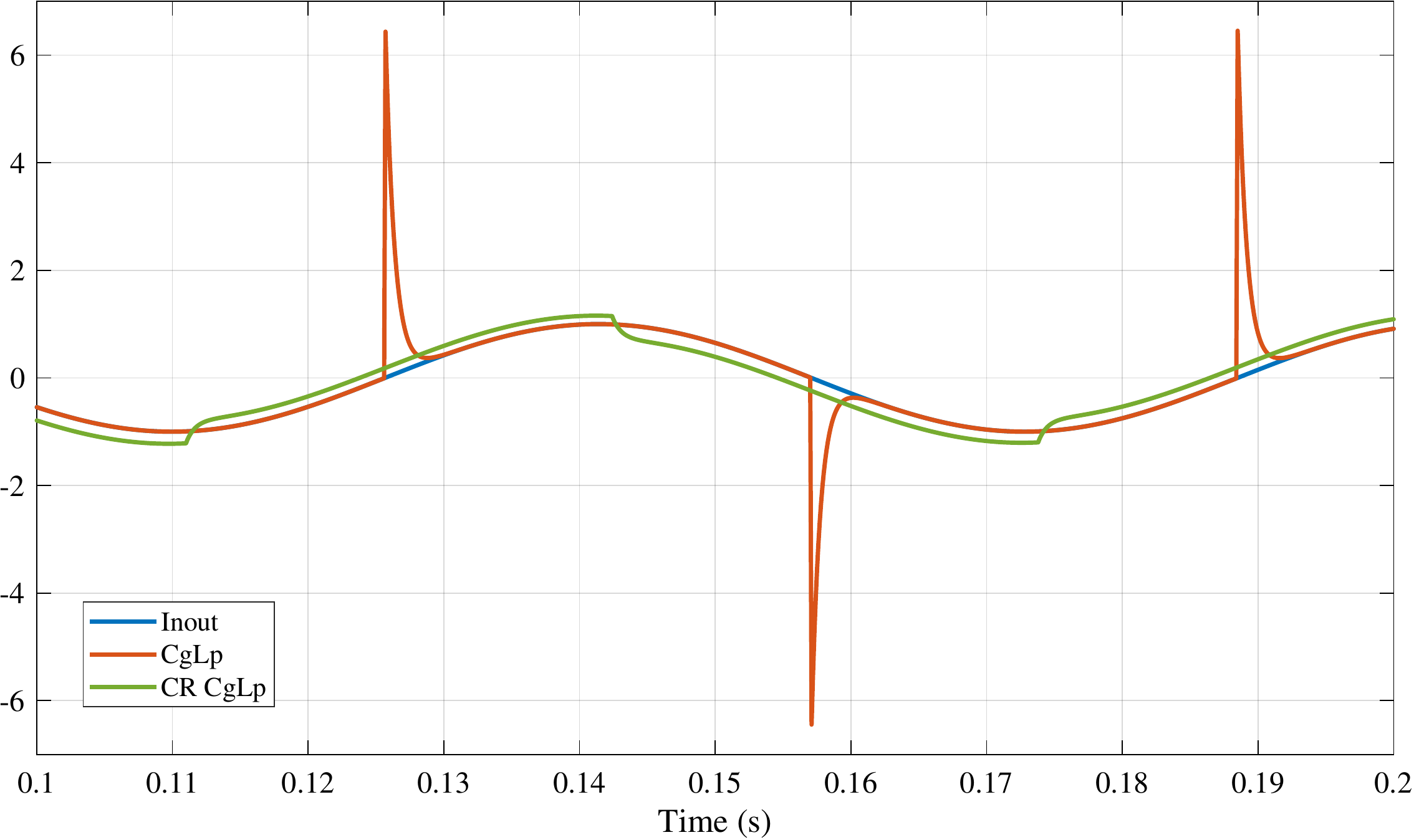}%
		\caption{Sinusoidal response of conventional CgLp compared to CR CgLp. Input is $\sin(100t)$..}
		\label{fig:cglp_sin}
	\end{subfigure}
	\caption{Simulation results for sinusoidal response of CI and CgLp compared to their CR architecture proposed in this paper. }
	\label{fig:sin_clegg_cglp}
\end{figure}   
\begin{pro}\label{rem:df}
	For $\omega_h = \infty$, the CR architecture has the same DF as the ${\sum{}}_R$ alone. 
\end{pro}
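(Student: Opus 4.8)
The plan is to read the describing function of the full CR chain $L(s)\to{\sum}_R\to R(s)$ directly off the HOSIDF decomposition in Fig.~\ref{fig:HOSIDF} and then show that the linear pre- and post-filters cancel at the first harmonic. First I would feed $e(t)=\sin(\omega t)$ into the architecture and track only the first harmonic through each block. The lead filter $L$ is linear, so in complex first-harmonic notation (unit input amplitude) the first harmonic of its output $x_1$ is simply $L(j\omega)$, with amplitude $l(\omega)=\lvert L(j\omega)\rvert$ and phase $\psi=\angle L(j\omega)$.

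Next I would argue that the reset block ${\sum}_R$ contributes exactly its stand-alone first-harmonic describing function $H_1(\omega)$ of Eq.~\eqref{eq:hosidf}. The crucial observation is that in the CR architecture ${\sum}_R$ resets at the zero crossings of $x_1$, which is precisely its own input --- exactly the situation under which Eq.~\eqref{eq:hosidf} was derived. The reset instants are thus still separated by the half period $\pi/\omega$, so the factor $e^{(\pi/\omega)A_r}$ in the formula is unchanged, and the phase offset $\psi$ of $x_1$ relative to $e$ only shifts the absolute reset times, not the input--output relationship. Because the element of Eq.~\eqref{eq:reset} is homogeneous of degree one (the inter-reset flow is linear and the reset map $A_\rho$ is linear, while positive amplitude scaling leaves the zero crossings fixed), the amplitude $l(\omega)$ is carried entirely by the $L(j\omega)$ factor and does not alter $H_1(\omega)$. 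Hence the first harmonic of $x_2$ is $H_1(\omega)\,L(j\omega)$, and since higher harmonics generated by ${\sum}_R$ stay at frequencies $n\omega$ with $n\ge 2$ after the linear filter $R$ and never fold back into the first harmonic, passing through $R$ multiplies the first harmonic by $R(j\omega)$. The describing function of the CR architecture is therefore
\begin{equation}
	\mathrm{DF}_{CR}(\omega)=R(j\omega)\,H_1(\omega)\,L(j\omega).
\end{equation}

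Finally I would evaluate $L(j\omega)R(j\omega)$ at $\omega_h=\infty$. Setting $\omega_h=\infty$ in Eq.~\eqref{eq:L} gives $L(s)=s/\omega_l+1$, while $R(s)=(s/\omega_l+1)^{-1}$, so $L(s)R(s)=1$ identically, and in particular $L(j\omega)R(j\omega)=1$. As these are scalar complex factors in the SISO setting they commute with $H_1(\omega)$, yielding $\mathrm{DF}_{CR}(\omega)=H_1(\omega)$, which is exactly the describing function of ${\sum}_R$ alone.

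The step I expect to be the main obstacle is the middle one: rigorously justifying that the reset block's first-harmonic contribution is left unchanged despite the amplitude and phase modification of its input. This hinges on the homogeneity of the reset map and on the preservation of ``reset at the zero crossings of the input,'' so I would state the homogeneity property explicitly rather than treating the $H_1(\omega)$ factor as self-evident; once that is in place, the cancellation $L(j\omega)R(j\omega)=1$ is a one-line computation.
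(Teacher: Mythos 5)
Your proposal is correct and follows essentially the same route as the paper: track the first harmonic through $L$, apply the stand-alone HOSIDF $H_1(\omega)$ to the reset block (which still resets at the zero crossings of its own input $x_1$), multiply by $R(j\omega)$, and cancel $L(j\omega)R(j\omega)$. You are in fact somewhat more careful than the paper on the middle step (explicitly invoking homogeneity and the unchanged half-period reset spacing, where the paper simply cites Eq.~\eqref{eq:hosidf}) and on the final step (using the exact identity $L(s)R(s)=1$ at $\omega_h=\infty$ rather than the approximation $R\approx L^{-1}$ for $\omega\ll\omega_h$).
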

\begin{proof}
	Let the states in CR architecture be denoted as shown in Fig.~\mbox{\ref{fig:new_arch}}. For the purpose of DF and HOSIDF analysis, one should have $e(t)=\sin(\omega t)$. Obviously, the steady-state response of $x_1(t)$ is:
	\begin{equation}
		x_1(t)=l(\omega)\sin(\omega t+\psi(\omega))
	\end{equation} 
	where $l(\omega)=\vert L(j\omega)\vert$ and $\psi(\omega)=\angle L(j\omega)$. Considering $x_1(t)$ as the input to the reset element and according to Eq.~\mbox{\eqref{eq:hosidf}},
	\begin{equation}
		x_{21}(t)=h_1(\omega)l(\omega)\sin(\omega t+\theta_1(\omega))
	\end{equation}
	where $x_{21}(t)$ stands for first harmonic of $x_2(t)$ and $h_1(\omega)=\vert H_1(j\omega)\vert$ and $\theta_1(\omega)=\psi(\omega)+\angle H_1(j\omega)$. And lastly,
	\begin{equation}
		u_{1}(t)=h_1(\omega)l(\omega)r(\omega)\sin(\omega t+\varphi_1(\omega))
	\end{equation}
	where $u_{1}(t)$ stands for first harmonic of $u(t)$ and $r(t)=\vert R(j\omega)\vert$ and $\varphi_1(\omega)=\theta_1(\omega)+\angle R(j\omega)$.
	Since $R(j\omega) \approx L^{-1}(j\omega)$ for $\omega \ll \omega_h$, it can be seen that
	\begin{align}
		\vert u_1(t)\vert&=h_1(\omega),\\
		\varphi_1(\omega)&=\psi(\omega)+\angle H_1(j\omega)-\psi(\omega)= \angle H_1(j\omega).
	\end{align}
\end{proof} 
\begin{pro}\label{rem:hosidf}
	The magnitude of higher-order harmonics for CR architecture is reduced compared to the ${\sum{}}_R$ alone. 
\end{pro}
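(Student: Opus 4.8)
The plan is to compare, harmonic by harmonic, the output of the CR architecture against that of ${\sum{}}_R$ alone, reading the magnitudes directly off the HOSIDF cascade in Fig.~\ref{fig:HOSIDF}. Exactly as in the proof of Proposition~\ref{rem:df}, feeding $e(t)=\sin(\omega t)$ through $L$ produces $x_1(t)=l(\omega)\sin(\omega t+\psi(\omega))$, and the reset element then generates the harmonics $h_n(\omega)\,l(\omega)\sin(n\omega t+\theta_n)$. The decisive observation is that the $n$-th harmonic now oscillates at frequency $n\omega$, so when it passes through the trailing filter $R(s)$ it is weighted by $r(n\omega)=\lvert R(jn\omega)\rvert$ and \emph{not} by $r(\omega)$. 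Hence the $n$-th harmonic of $u(t)$ has magnitude $h_n(\omega)\,l(\omega)\,r(n\omega)$, whereas for ${\sum{}}_R$ alone (input $\sin(\omega t)$ applied directly) the corresponding magnitude is simply $h_n(\omega)$.

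The statement therefore reduces to the single scalar inequality $l(\omega)\,r(n\omega)<1$ for every $n\ge 2$. I would establish this by substituting the expressions from Eq.~\eqref{eq:L}:
\begin{equation*}
l(\omega)\,r(n\omega)=\frac{\sqrt{(\omega/\omega_l)^2+1}}{\sqrt{(\omega/\omega_h)^2+1}}\cdot\frac{1}{\sqrt{(n\omega/\omega_l)^2+1}}.
\end{equation*}
Writing $a=(\omega/\omega_l)^2>0$ and bounding the denominator factor $\sqrt{(\omega/\omega_h)^2+1}\ge 1$ below by $1$ (which only enlarges the expression) leaves the upper bound $\sqrt{(1+a)/(1+n^2a)}$, and since $n^2a>a$ for $n>1$ this is strictly below unity. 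This simultaneously shows that a finite but large $\omega_h$ does no harm — it in fact tightens the bound — and that in the regime $\omega_l\ll\omega\ll\omega_h$ the ratio approaches $1/n$, quantifying the suppression of the higher harmonics.

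The genuinely substantive point, and the one I would emphasise, is the frequency at which $R$ acts. The whole construction is designed so that at the fundamental $R(j\omega)\approx L^{-1}(j\omega)$ cancels the lead, leaving the describing function untouched (Proposition~\ref{rem:df}); but this cancellation is a fundamental-only phenomenon. Because the harmonics are born inside the reset nonlinearity at the multiples $n\omega$, they see $R$ evaluated off the cancellation point, where the lag attenuates more strongly than the lead $L$ boosted at $\omega$. The only care needed is to keep the two arguments $\omega$ and $n\omega$ distinct throughout; conflating them, as the fundamental analysis might tempt one to do, would hide precisely the mechanism that produces the net suppression. Recalling that $H_n\equiv 0$ for even $n$ by Eq.~\eqref{eq:hosidf}, the inequality is only needed for odd $n\ge 3$, but it holds verbatim for all $n\ge 2$.
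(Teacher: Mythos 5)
Your proof is correct and takes essentially the same route as the paper's: both push the harmonics through the cascade to obtain $A_n(\omega)=h_n(\omega)\,l(\omega)\,r(n\omega)$, observe that the $n$-th harmonic sees $R$ at $n\omega$ rather than at the cancellation point $\omega$, and reduce the claim to $l(\omega)r(n\omega)\le\sqrt{(1+a)/(1+n^2a)}<1$ with the same $1/n$ limit for $\omega\gg\omega_l$. Your only (minor) refinement is that you bound the finite-$\omega_h$ factor rigorously instead of invoking the approximation $r^{-1}(\omega)\approx l(\omega)$ as the paper does.
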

\begin{proof}
	Following the same reasoning as Proposition~\mbox{\ref{rem:df}}, one has 
	\begin{equation}
		u_n(t)= { h_{n}( \omega ) l( \omega )} { r( n\omega )}\sin(n\omega t + \varphi_n(\omega)),
	\end{equation}
	where $u_n(t)$ is the $n^\text{th}$ harmonic of $u(t)$, $h_n(\omega)=\vert H_n(j\omega)\vert $ and $\varphi_n= \angle  H_n(j\omega)+\angle  L(j\omega) + \angle R(jn\omega)$.
	Since $r^{-1}(\omega) \approx l(\omega) $ for $\omega_h \gg \omega_l$, and since $l(\omega)$ is an increasing function 
	\begin{equation}
		A_n(\omega)< h_{n}( \omega ),
	\end{equation}
	where $A_n(\omega)$ stands for $\vert u_n(t)\vert$. In other terms, for large enough $\omega_{h}$,
	\begin{equation}
		A_n(\omega)\approx \sqrt{\frac{(\omega/\omega_l)^2+1}{(n\omega/\omega_l)^2+1}}h_n(\omega).
	\end{equation}
	For $\omega\ll\omega_{l}$,
	\begin{equation}
		A_n(\omega)=h_n(\omega)
	\end{equation}
	and for $\omega\gg\omega_{l}$,
	\begin{equation}
		A_n(\omega)=\frac{1}{n}h_n(\omega).
	\end{equation}
\end{proof}
Fig.~\mbox{\ref{fig:HOSIDF}} illustrates the harmonic generation for CR architecture.\\
Theorem~\ref{thm:cont} and Propositions~\ref{rem:df} and~\ref{rem:hosidf} may seem somewhat trivial, however they indicate very important features of the CR architecture in terms of steady-state performance. As mentioned earlier the frequency domain analysis and design for reset control systems heavily depends on the accuracy of DF approximation. The CR architecture maintains the DF characteristics of the reset elements and reduces the higher-order harmonics which makes the DF approximation more accurate. It is shown in~\cite{karbasizadeh2020benefiting, karbasizadeh2021fractional} that it improves the performance of the systems in terms of steady-state precision. \\
Moreover, the discontinuity of output signal in reset controllers creates practical problems such as amplifier or actuator saturation and excitation of higher frequency modes for complex plants. The CR architecture will solve these problems by reducing the known peaks in the control input of the reset control systems. \\
In order to illustrate the  effect of the CR architecture on HOSIDF of reset elements, the HOSIDF of a Clegg Integrator (CI) and a CR CI are compared in Fig.~\ref{fig:clegg_hosidf}, this figure shows that while the DF for these two elements are identical a significant reduction in HOSIDF of CR CI with respect to CI happens, this indicates that as we approach higher frequencies, the DF will become a more accurate approximation in CR CI. The same comparison is made for CgLp and CR CgLp in Fig.~\ref{fig:cglp_hosidf}. Both CgLps are designed to create a phase lead of $15^\circ$ at 100 rad$/$s while maintaining a constant gain. A significant reduction in magnitude of higher-order harmonics is also clear here, which indicates that CR CgLp has a much closer behaviour to the first-order harmonic which is the ideal behavior for reset control systems.\\
In Fig.~\ref{fig:sin_clegg_cglp}, the sinusoidal response of CI vs. CR CI at 1~rad$/$s and CgLp vs. CR CgLp at 100~rad$/$s are depicted. At both comparisons, it is clear that the output of CR architecture is continuous as opposed to reset elements in their conventional form, and the response are much smoother which shows the reduction of higher-order harmonics. It has to be noted for the case of CgLp, the big peak in the response, which can cause aforementioned practical issue, is removed in CR CgLp.\\
The superiority of CgLp control structures over other reset control strategies in precision motion control has been shown in many researches~\cite{saikumar2019constant,dastjerdi2021frequency,karbasizadeh2021fractional}. In the remainder of this paper, for the sake of conciseness, only CR CgLp architecture will be studied. However, the same approach can be used for other reset control structures.\\
For the case of CR CgLp, the magnitude of higher-order harmonics for frequencies lower than $\omega_{c}$ (where it matters the most for tracking and disturbance rejection~\cite{karbasizadeh2020benefiting,karbasizadeh2021fractional}) is also affected by parameters other than $\omega_l$. These parameters are $\omega_r$ and $\gamma$. However, unlike $\omega_{l}$, these two parameters also affect the DF phase and consequently the amount phase lead created by CR CgLp. This creates a trade-off between reduction of higher-order harmonics magnitude and maximum achievable Phase Advantage (PA) of CR CgLp. Fig.~\ref{fig:wr_sweep} illustrates the trade-off. CgLp will be logically designed to provide phase lead at cross-over frequency, i.e., $\omega_c$. As $\omega_{r}$ approaches $\omega_{c}$ the integral of $3^\text{rd}$ harmonic magnitude over frequencies below $\omega_{c}$ decreases significantly. The reduction of integral value is an indication of the reduction of magnitude of higher-order harmonics in general. Furthermore, the peak of higher-order harmonics will also shift to higher frequencies when $\omega_r$ approaches higher frequencies. Thus it seems logical to have this peak in frequencies where tracking and disturbance rejection performance is not a matter of concern, i.e., the frequencies after the bandwidth. When $\omega_{r}$ is in $[\omega_{c}, 1.5\omega_{c}]$, higher-order harmonics are very low and still a PA up to 35$^\circ$ is achievable. This can be a general guideline for tuning $\omega_{r}$ in CR CgLp.
\begin{figure}[t!]
	\centering
	\includegraphics[width=\textwidth]{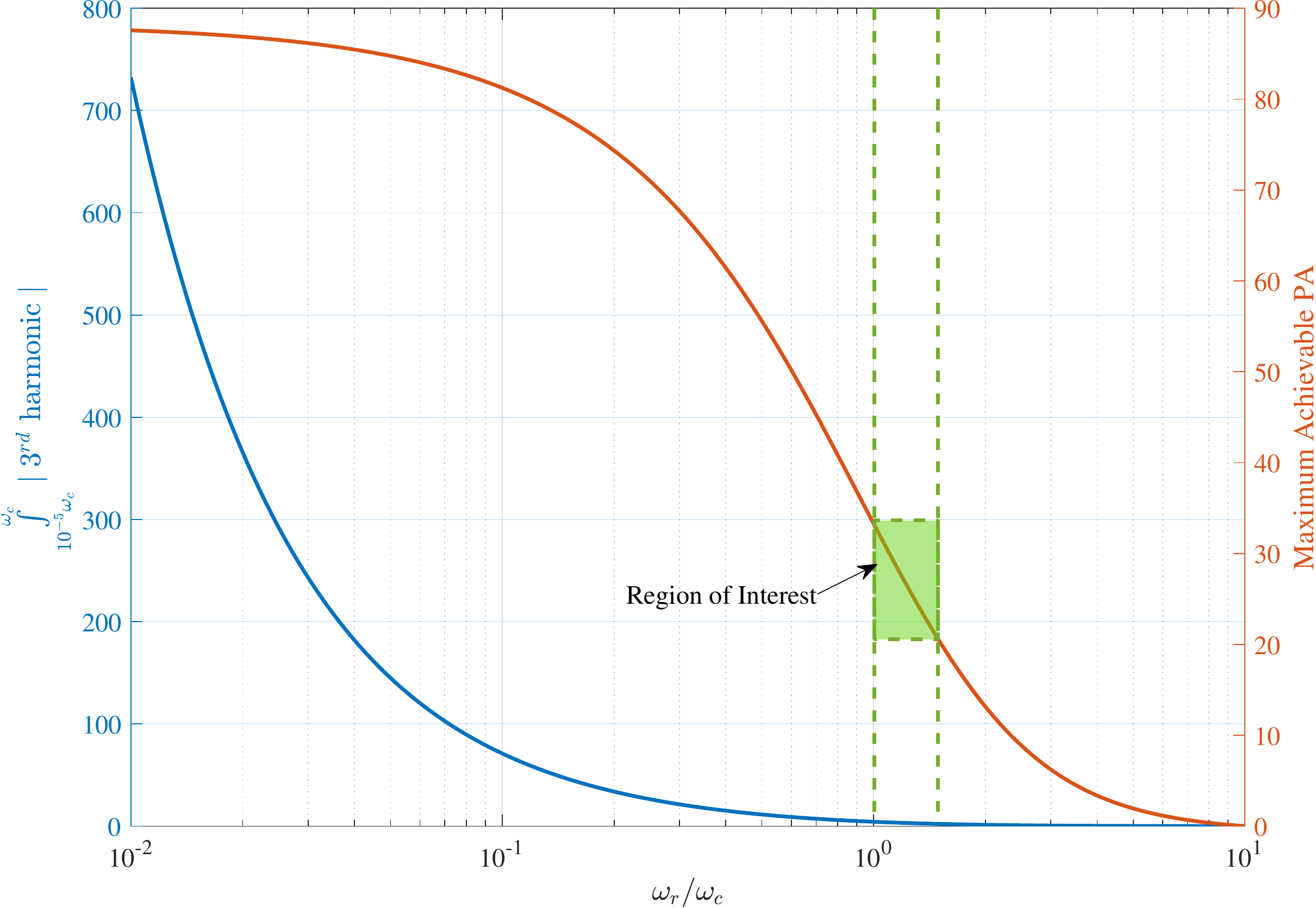}
	\caption{Integral of 3$^\text{rd}$ harmonic magnitude for frequencies below $\omega_{c}$ and the maximum achievable PA at $\omega_{c}$ vs. the ratio of $\omega_{r}$ to $\omega_{c}$. $\gamma=-1$. }
	\label{fig:wr_sweep}
\end{figure}
\section{Closed-Loop Transient Response Properties of the CR CgLp Architecture}

In the researches done on CgLp control systems in the literature, the only considered design parameter for changing the transient response of systems is phase margin. In the context of linear control systems, phase margin is determining parameter; however, that is not the case for reset control system and especially for the CR architecture presented in this paper. Referring to Eq.~\eqref{eq:reset_law}, speaking in terms of the closed loop, in CR architecture, the reset condition is not only based on the error signal but a linear combination of error and its derivative. This will change transient response of the system as well~\cite{dastjerdi2020optimal,vanden2020hybrid,cai2020optimal}. In order to study the effect of parameters of CR architecture on transient response of a closed-loop precision motion control system, a data-based approach has been used in this paper.\\ Fig.~\ref{fig:block_closed} shows the block diagram of the control loop. As it is shown in the figure, the reset part of CgLp, i.e., ${\sum}_{R}$, is surrounded by $L(s)$ and $R(s)$ to create a CR CgLp.\\
Following the discussion in Section~\mbox{\ref{sec:dynamics}}, the plant which is used for this data-based study is a mass system, i.e., $P(s)=1/s^2$. In experimental validation, it will be shown that the analysis will also hold mass-spring-damper systems.\\
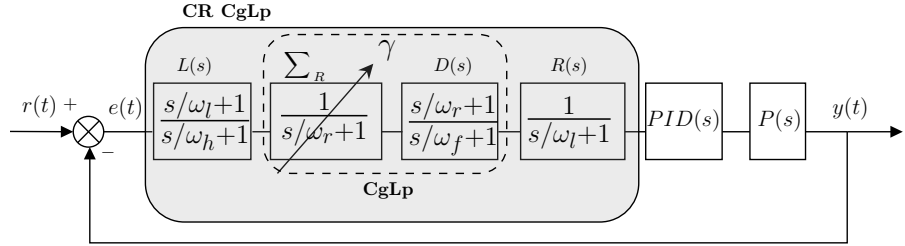
\begin{figure}[t!]
	\centering
	\resizebox{\textwidth}{!}{

		\tikzset{every picture/.style={line width=0.75pt}} 
		
		\begin{tikzpicture}[x=0.75pt,y=0.75pt,yscale=-1,xscale=1]
			
			\draw  [line width=0.75]  (100.9,70) -- (170.9,70) -- (170.9,125) -- (100.9,125) -- cycle ;
			\draw  [line width=0.75]  (185,70) -- (265,70) -- (265,125) -- (185,125) -- cycle ;
			\draw [line width=0.75]    (190,135) -- (257.03,58.26) ;
			\draw [shift={(259,56)}, rotate = 491.13] [fill={rgb, 255:red, 0; green, 0; blue, 0 }  ][line width=0.08]  [draw opacity=0] (10.72,-5.15) -- (0,0) -- (10.72,5.15) -- (7.12,0) -- cycle    ;
			\draw  [line width=0.75]  (530,70) -- (570,70) -- (570,125) -- (530,125) -- cycle ;
			\draw [line width=0.75]    (65,105) -- (100,105) ;
			\draw [line width=0.75]    (600,105) -- (600,185) -- (55,185) -- (55,119) ;
			\draw [shift={(55,116)}, rotate = 450] [fill={rgb, 255:red, 0; green, 0; blue, 0 }  ][line width=0.08]  [draw opacity=0] (8.93,-4.29) -- (0,0) -- (8.93,4.29) -- cycle    ;
			\draw [line width=0.75]    (-2.42,104.78) -- (40.08,104.99) ;
			\draw [shift={(43.08,105)}, rotate = 180.27] [fill={rgb, 255:red, 0; green, 0; blue, 0 }  ][line width=0.08]  [draw opacity=0] (8.93,-4.29) -- (0,0) -- (8.93,4.29) -- cycle    ;
			\draw [line width=0.75]    (600,105) -- (637,105) ;
			\draw [shift={(640,105)}, rotate = 180] [fill={rgb, 255:red, 0; green, 0; blue, 0 }  ][line width=0.08]  [draw opacity=0] (8.93,-4.29) -- (0,0) -- (8.93,4.29) -- cycle    ;
			\draw [line width=0.75]    (172,105) -- (185,105) ;
			\draw [line width=0.75]    (440,105) -- (455,105) ;
			\draw  [line width=0.75]  (455,70) -- (510,70) -- (510,125) -- (455,125) -- cycle ;
			\draw [line width=0.75]    (510,105) -- (530,105) ;
			\draw   (43.08,105) .. controls (43.08,98.89) and (47.98,93.94) .. (54.04,93.94) .. controls (60.09,93.94) and (65,98.89) .. (65,105) .. controls (65,111.11) and (60.09,116.06) .. (54.04,116.06) .. controls (47.98,116.06) and (43.08,111.11) .. (43.08,105) -- cycle ; \draw   (46.29,97.18) -- (61.79,112.82) ; \draw   (61.79,97.18) -- (46.29,112.82) ;
			\draw  [line width=0.75]  (365,70) -- (440,70) -- (440,125) -- (365,125) -- cycle ;
			\draw  [line width=0.75]  (279.5,70) -- (348.5,70) -- (348.5,125) -- (279.5,125) -- cycle ;
			\draw [line width=0.75]    (265,105) -- (280,105) ;
			\draw [line width=0.75]    (349,105) -- (365,105) ;
			\draw [line width=0.75]    (570,105) -- (600,105) ;
			\draw  [fill={rgb, 255:red, 155; green, 155; blue, 155 }  ,fill opacity=0.2 ] (95,58) .. controls (95,42.54) and (107.54,30) .. (123,30) -- (422,30) .. controls (437.46,30) and (450,42.54) .. (450,58) -- (450,142) .. controls (450,157.46) and (437.46,170) .. (422,170) -- (123,170) .. controls (107.54,170) and (95,157.46) .. (95,142) -- cycle ;
			\draw  [dash pattern={on 4.5pt off 4.5pt}] (180,55) .. controls (180,43.95) and (188.95,35) .. (200,35) -- (335,35) .. controls (346.05,35) and (355,43.95) .. (355,55) -- (355,115) .. controls (355,126.05) and (346.05,135) .. (335,135) -- (200,135) .. controls (188.95,135) and (180,126.05) .. (180,115) -- cycle ;

			\draw (19,87) node  [font=\large]  {$r( t)$};
			\draw (602.5,88) node  [font=\large]  {$y( t)$};
			\draw (81,88) node  [font=\large]  {$e( t)$};
			\draw (41,85) node    {$+$};
			\draw (68,120) node    {$-$};
			\draw (189.5,75.57) node [anchor=north west][inner sep=0.75pt]  [font=\huge]   {$\frac{1}{s/\omega _{r } +1}$};
			\draw (367,78) node [anchor=north west][inner sep=0.75pt]  [font=\huge]  {$\frac{1}{s/\omega _{l} +1}$};
			\draw (102.9,74.65) node [anchor=north west][inner sep=0.75pt]  [font=\huge]  {$\frac{s/\omega _{l} +1}{s/\omega _{h} +1}$};
			\draw (261,37.4) node [anchor=north west][inner sep=0.75pt]  [font=\LARGE]  {$\gamma $};
			\draw (281.5,75.57) node [anchor=north west][inner sep=0.75pt]  [font=\huge]  {$\frac{s/\omega _{r} +1}{s/\omega _{f} +1}$};
			\draw (454,87.4) node [anchor=north west][inner sep=0.75pt]  [font=\large]  {$PID( s)$};
			\draw (534,87.4) node [anchor=north west][inner sep=0.75pt]  [font=\large]  {$P( s)$};
			\draw (120,12) node [anchor=north west][inner sep=0.75pt]   [align=left] {\textbf{CR CgLp}};
			\draw (250,140) node [anchor=north west][inner sep=0.75pt]   [align=left] {\textbf{CgLp}};
			\draw (193,45) node [anchor=north west][inner sep=0.75pt] [font=\normalsize]   {${\displaystyle \sum{}}_{R}$};
			\draw (385,50) node [anchor=north west][inner sep=0.75pt]  [font=\normalsize]  {$R( s)$};
			\draw (116,50) node [anchor=north west][inner sep=0.75pt]  [font=\normalsize]  {$L( s)$};
			\draw (301,50) node [anchor=north west][inner sep=0.75pt]  [font=\normalsize]  {$D( s)$};

		\end{tikzpicture}
	}
	\caption{The control loop used for precision motion control using CR CgLp. $P(s)$ is the plant. $PID(s)=k_p\left(1+{\omega_{i}}/{s}\right)\left(\frac{{s}/{\omega_d}+1}{{s}/{\omega_t}+1}\right)$.}
	\label{fig:block_closed}
\end{figure}
The $H_\beta$ condition for stability of the reset control systems necessarily requires the BLS to be stable. Thus, a PID controller is present in the loop. However, according to loop-shaping technique, to ensure the maximum steady-state precision performance for the system, the differentiation part of the PID should be as weak as possible to only guarantee the stability of the BLS. Normally, such a tuning for PID control system will perform poorly in terms transient response in absence of CR CgLp. Nevertheless, it will be shown that the presence CR CgLp will significantly improve transient response without affecting the maximally precise steady-state performance of the system. In this study, using a rule of thumb, the PID is tuned such that the BLS has $5^\circ$ phase margin, which is enough to stabilize the BLS and since it has a weak differentiator, does not jeopardize the steady-state precision. The following  equation shows the parameters chosen in this regard.\\
\begin{equation}
	\omega_{i}=\omega_{c}/10,\quad\omega_d=\omega_{c}/1.2,\quad\omega_t=1.2\omega_c
\end{equation} 
And consequently, $k_p$ can be determined according to $\omega_c$. According to the discussions in Section~\ref{sec:steady-state}, without loss of generality, for this data-based study,
\begin{equation}
	\omega_{r}=1.2\omega_c.
\end{equation} 
This leaves the effect of $\gamma$ and $\omega_{l}$ to be studied. Since $\omega_{r}$ and the parameters of PID are fixed, the only parameter which affects the phase margin of the designed system is $\gamma$. It has to be noted, that according to Proposition~\ref{rem:df}, CR architecture does not change the DF, thus $\omega_{l}$ does not have an effect on phase margin. Fig.~\ref{fig:hosidf_gamma} shows the open-loop DF of the system under study and also the effect of $\gamma$ on phase margin. $\gamma=1$ indicates the base linear system and as the value $\gamma$ decreases the phase margin will increase. At $\omega=\omega_{c}$, it can be seen that CR CgLp not only does not change the gain behavoiur, but also creates a positive slope in phase, which resembles the complex-order controllers. In the following, the effect of phase margin and $\omega_{l}$ on overshoot and settling time of the closed-loop system will be shown.
\begin{figure}[t!]
	\centering
	\includegraphics[width=\textwidth]{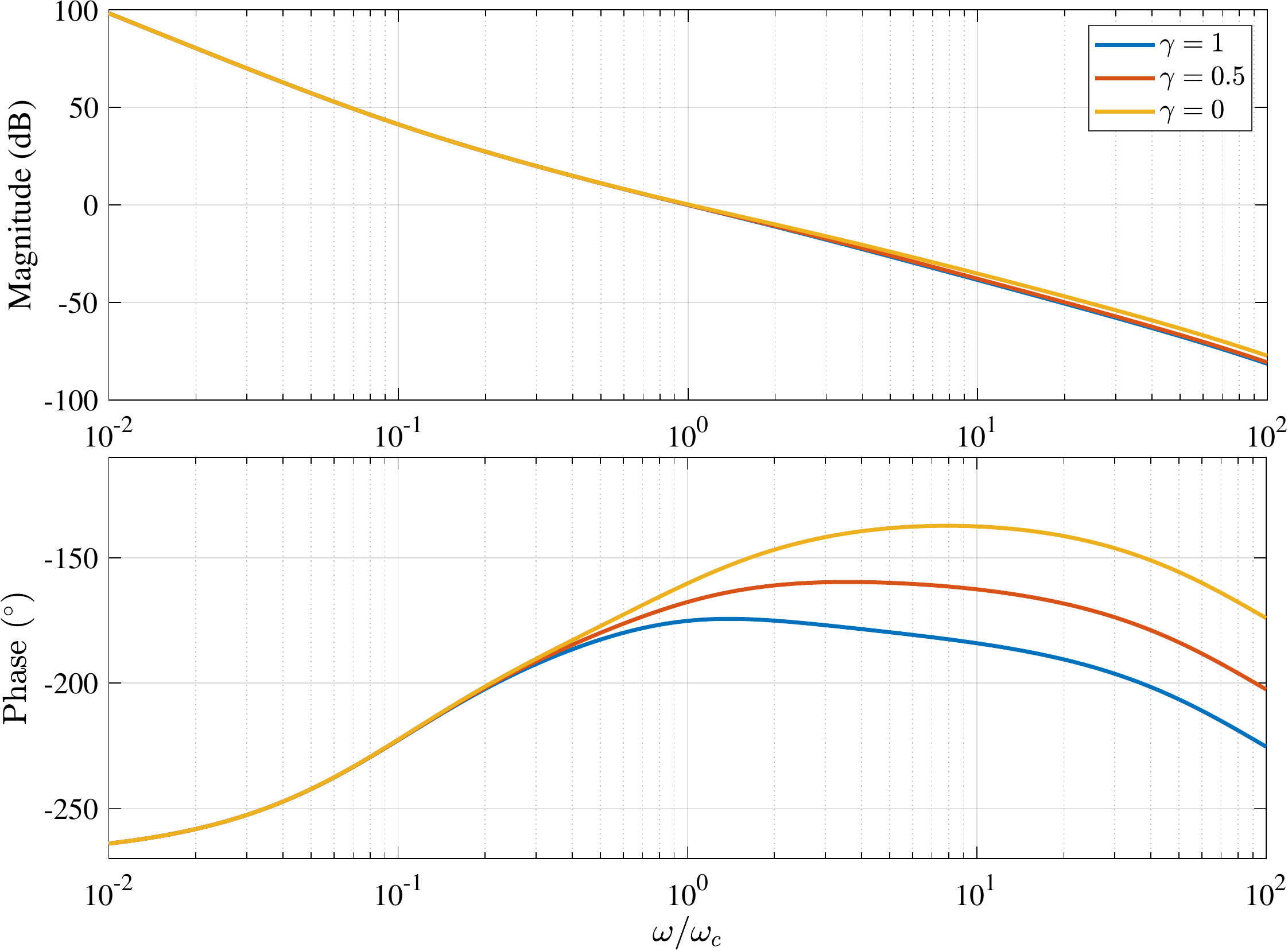}
	\caption{DF of the open loop system for different $\gamma$ values versus the ration of $\omega/\omega_{c}$.}
	\label{fig:hosidf_gamma}
\end{figure}
\subsection{Overshoot}
As mentioned before, it is expected that the variation of phase margin caused by variation of $\gamma$ and the variation on $\omega_{l}$ create different transient responses for the closed-loop system. In order to do a data-based study, a unit step reference was given to the closed-loop system and the the response was simulated using Simulink environment of Matlab. The overshoot versus the variation of $\omega_{l}$ and phase margin is depicted in Fig.~\ref{fig:overshoot}.\\
From Fig.~\ref{fig:overshoot}, it can be concluded that similar to linear controllers, with increase of the phase margin the overshoot decreases almost linearly. Furthermore, for a constant value of phase margin as $\omega_{l}$ decrease the overshoot decreases and for some configurations a non-overshoot performance is realizable. It should be also noted that as $\omega_{l}$ increases, it weakens the lead element $L(s)$ and thus system gradually tends to the performance of the conventional CgLp. Overshoot of the system in the absence of CR~CgLp, i.e., BLS, is 96\%.\\
In the range of Phase Margin (PM) $\in \left[10,30\right]$ and $\omega_{l}/\omega_c \in \left[0.1,1\right]$, the decrease of overshoot (OS) is almost linear with respect decrease of $\log(\omega_{l})$. A fitting operation reveals the following relation between the OS and PM and $\omega_{l}$.  
\begin{equation}\label{eq:fit_overshoot}
	OS=0.95\log\left(\frac{\omega_{l}}{\omega_{c}}\right)-0.04PM+1.25
\end{equation}
where PM is in degrees. \\
\begin{figure}[t!]
	\centering
	\includegraphics[width=\textwidth]{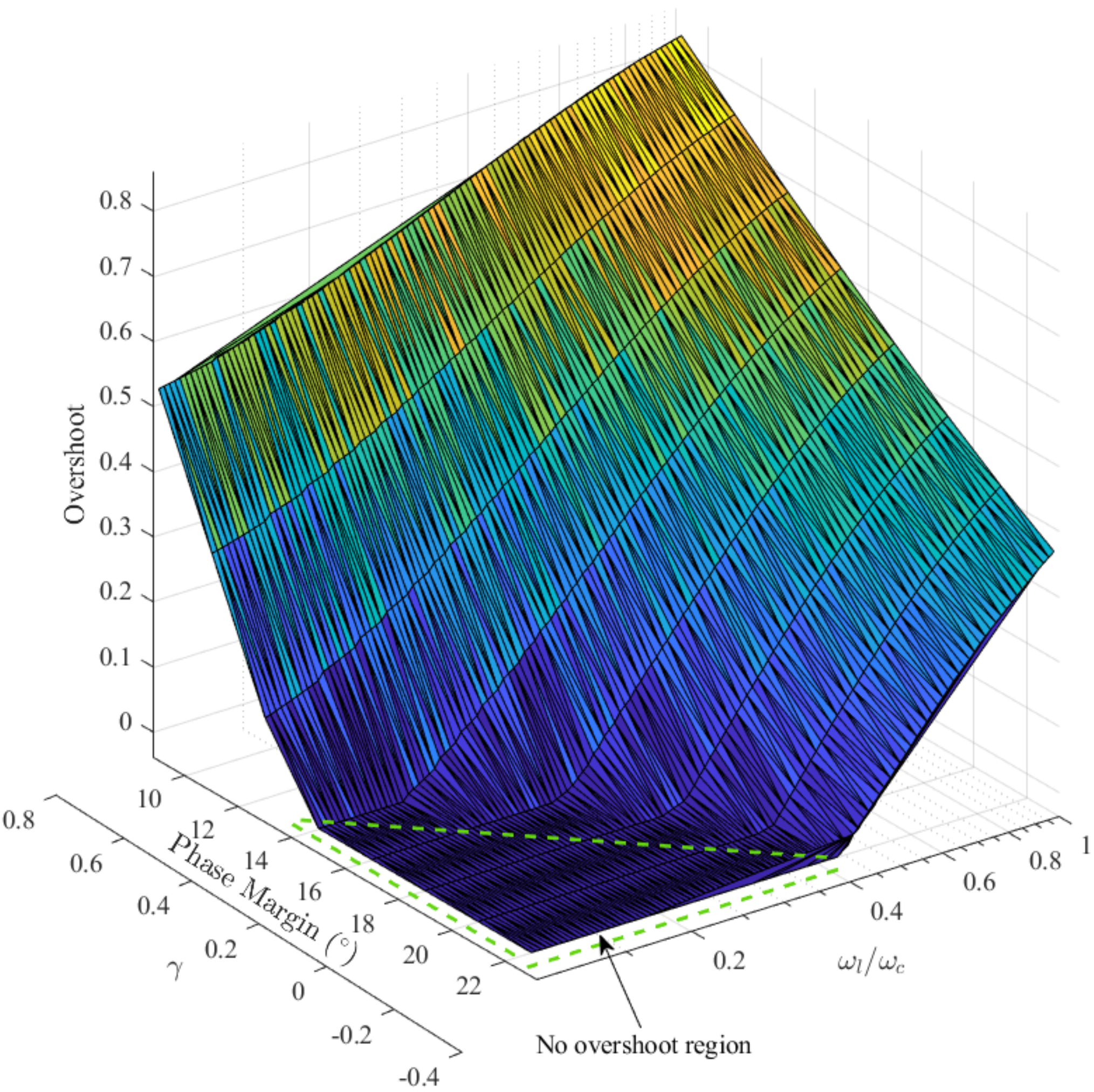}
	\caption{The overshoot of the system to a unit step for phase margin in range of $\left[5,22\right]$ and $\omega_l/\omega_c \in \left[0.1,1\right]$. $5^\circ$ of the phase margin is provided through base linear system. The overshoot in the absence of the CR~CgLp, i.e., BLS, is 0.962.}
	\label{fig:overshoot}
\end{figure}
In order to better illustrate the effect of these two parameters on overshoot and in general transient response of the closed-loop system, one can refer to Fig.~\ref{fig:step_time_sweep_PM_wl}. For this simulation $\omega_{c}=100~\text{rad}/\text{s}$. Fig.~\ref{fig:step_time_sweep_wl} shows the reduction of overshoot by reduction of $\omega_{l}$, the non-overshoot response is shown to be realizable. However, too much reduction of $\omega_{l}$ can result in long settling times as is the case for $\omega_{l}=10~\text{rad}/\text{s}$. Obviously, since CgLp does not contain $\omega_{l}$, it has only one response.\\
Fig.~\ref{fig:step_time_sweep_PM} demonstrates the effect of PM on step response of the system while $\omega_{l}=33~\text{rad}/\text{s}$, the presence of CR architecture amplifies the reduction of overshoot caused by increase of PM. It has to be noted that various values of PM is achieved by changing $\gamma$.\\
The study shows the significant improve in transient response by CR CgLp. It worth mentioning that it will be showed later that this improvement in transient will not sacrifice the steady-state response.\\  
\begin{figure}[!t]
	\centering
	\begin{subfigure}{\textwidth}
		\includegraphics[width=\textwidth]{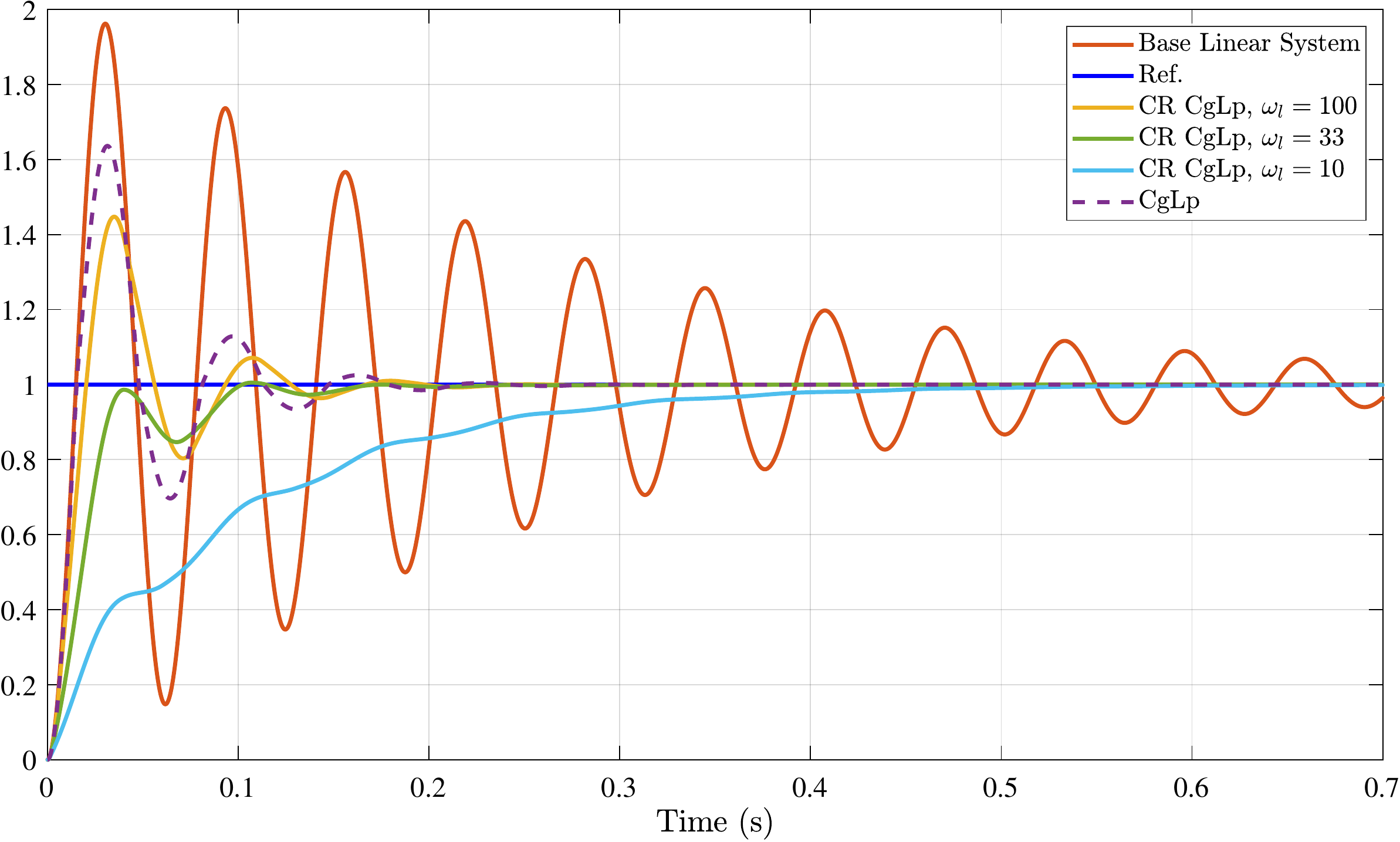}%
		\caption{Step response of closed-loop system for base linear system, CgLp and CR~CgLp for various values of $\omega_{l}$ in $\text{rad}/\text{s}$. PM is fixed at $20^\circ$ and $\omega_{c}=100~\text{rad}/\text{s}$. }
			\label{fig:step_time_sweep_wl}
	\end{subfigure}\\
	\begin{subfigure}{\textwidth}
		\includegraphics[width=\textwidth]{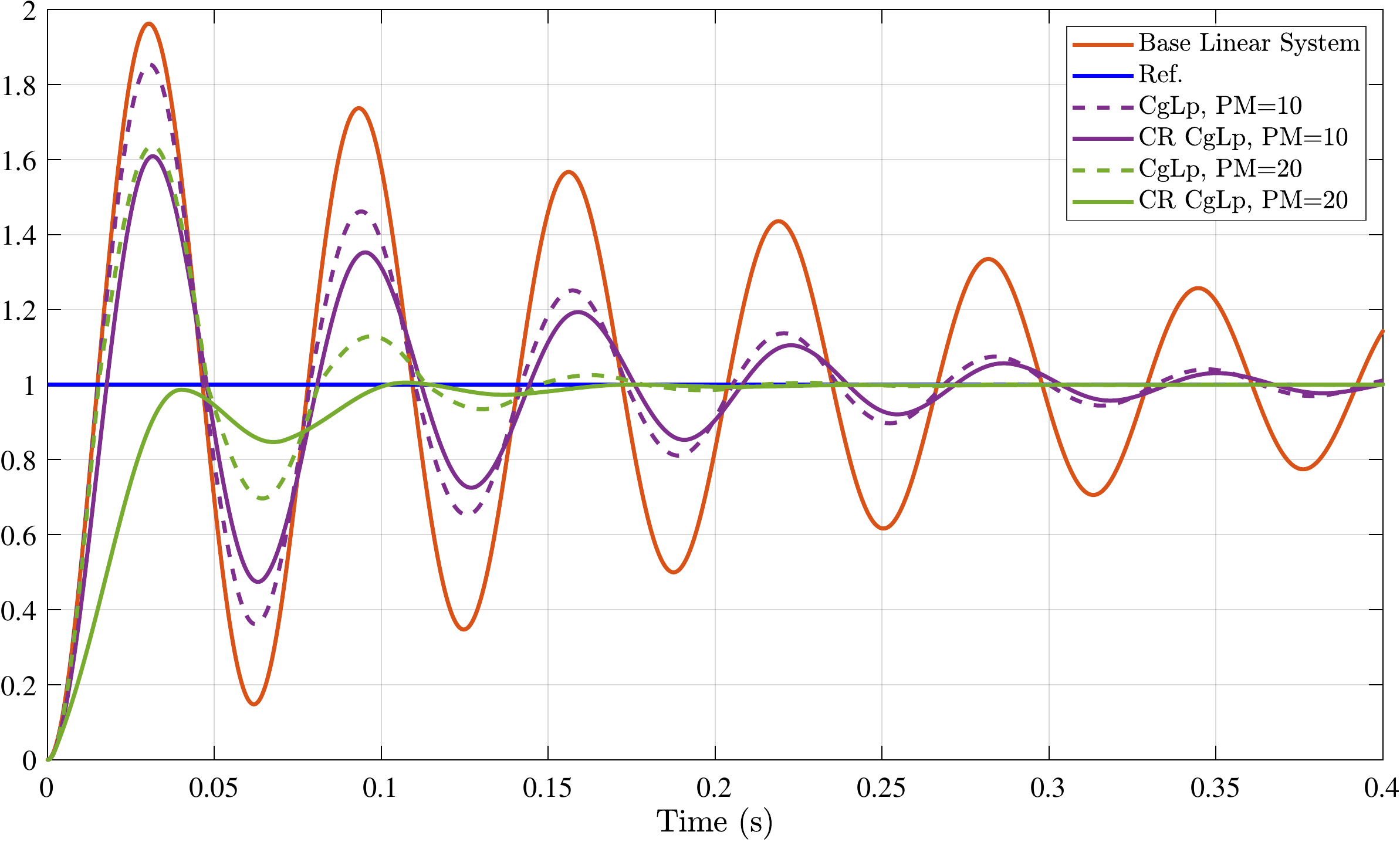}%
			\caption{Step response of closed-loop system for base linear system, CgLp and CR~CgLp for various values of PM. $\omega_{c}=33~\text{rad}/\text{s}$ and $\omega_{c}=100~\text{rad}/\text{s}$.}
			\label{fig:step_time_sweep_PM}
	\end{subfigure}

	\caption{Step response of closed-loop system for base linear system, CgLp and CR~CgLp for various values of PM and $\omega_l$.}
	\label{fig:step_time_sweep_PM_wl}
\end{figure} 
\subsection{Settling time}  
According to Fig.~\ref{fig:overshoot} and~\ref{fig:step_time_sweep_wl}, reduction of $\omega_{l}$ generally decreases overshoot, it may have an adverse effect on settling time. In order to find a sweet spot where overshoot and settling time are improved simultaneously the same sweep as Fig.~\ref{fig:overshoot} has been done for settling time and depicted in Fig.~\ref{fig:settling_time}. According to this figure, for a constant $\omega_{l}$ the settling time decreases with increase of PM as as like the case for linear controllers. However, there is no linear relation for $\omega_{l}/\omega_{c}$ and settling time.\\
As a rule of thumb, $\omega_{l}/\omega_{c} \in \left[0.3,0.6\right]$ and PM larger than $20^\circ$ shows a favorable settling time. In this range the settling time of the CR~CgLp is shorter than CgLp and referring to Fig.~\ref{fig:overshoot}, non-overshoot performance can also be achieved. Thus one can use this general rule of thumb as the tuning guideline of CR~CgLp. 
\begin{figure}[t!]
	\centering
	\includegraphics[width=\textwidth]{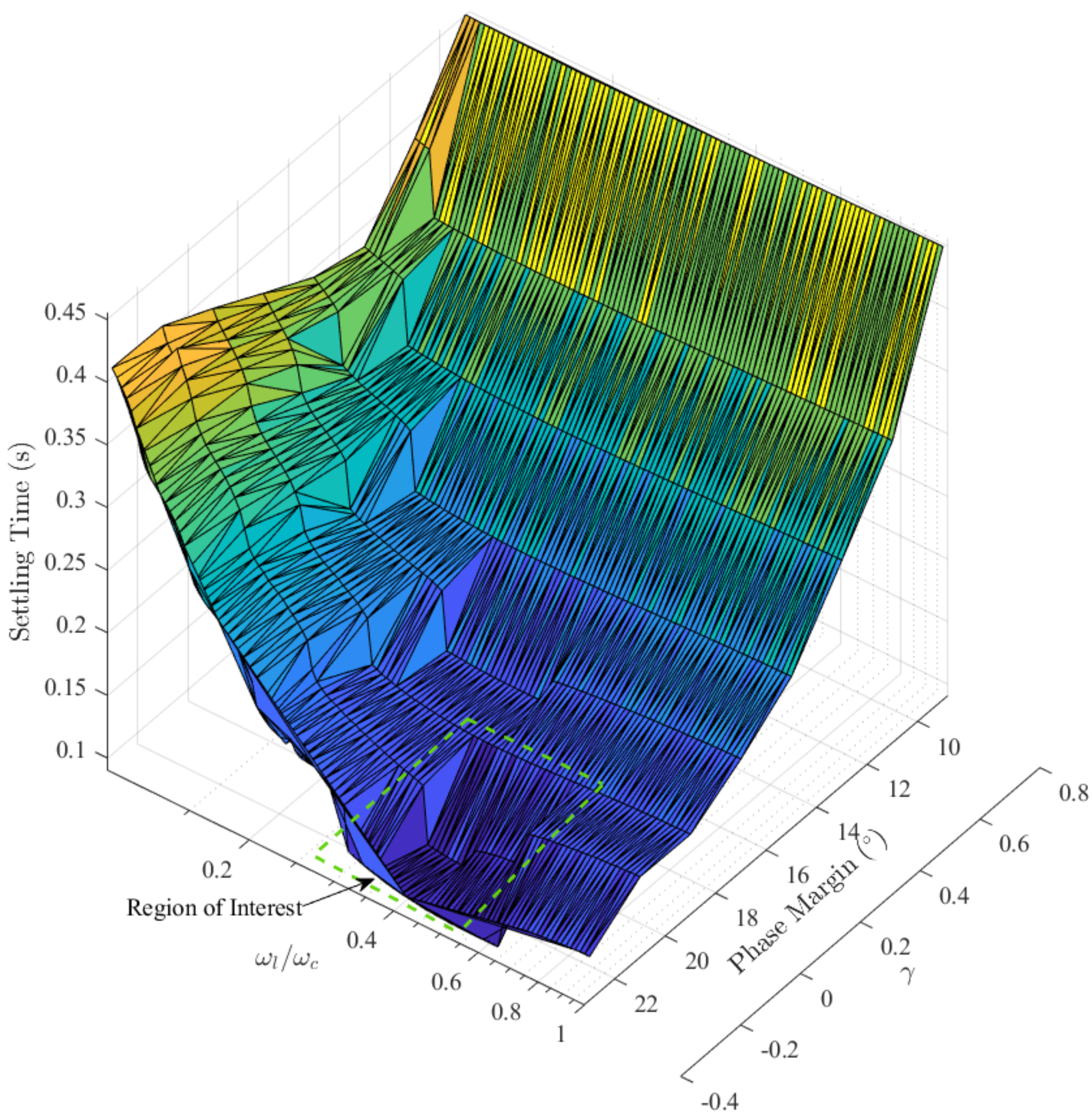}
	\caption{The settling time of the system for a unit step for phase margin in range of $\left[5,22\right]$ and $\omega_l/\omega_c \in \left[0.1,1\right]$. $\omega_{c}=100~\text{rad}/\text{s}$. The settling time in the absence of the CR~CgLp, i.e., BLS, is 0.945 s.}
	\label{fig:settling_time}
\end{figure}
\section{Closed-Loop Steady-State Performance of the CR CgLp Architecture}
As discussed earlier, the DF method can be used as an approximation for open-loop steady-state performance of reset control systems. The DF can also be used to find the sensitivity functions of closed-loop reset control systems using the linear relations between open-loop transfer functions and closed-loop sensitivity functions. While the resulted sensitivity plots show the ideal steady-state behaviour for the designed reset controllers, the presence of higher-order harmonics makes achieving it impossible. Thus, as discussed in Section~\mbox{\ref{sec:steady-state}}, reducing higher-order harmonics brings the reset controller closer to the ideal behaviour.\\
It is shown that CR architecture and its tuning guidelines can reduce the magnitude of higher-order harmonics. Thus, it is expected that actual closed-loop steady-sate performance is very close to approximation created by DF. In order to verify the latter, a comparison has been made. A series of simulations has been run to determine the actual sensitivity functions values for different frequencies. However, because of nonlinearity of the system, the output will not be sinusoidal. To approximate, the second norm of the signals has been used.\\ 
\begin{figure}[t!]
	\centering
	\includegraphics[width=\textwidth]{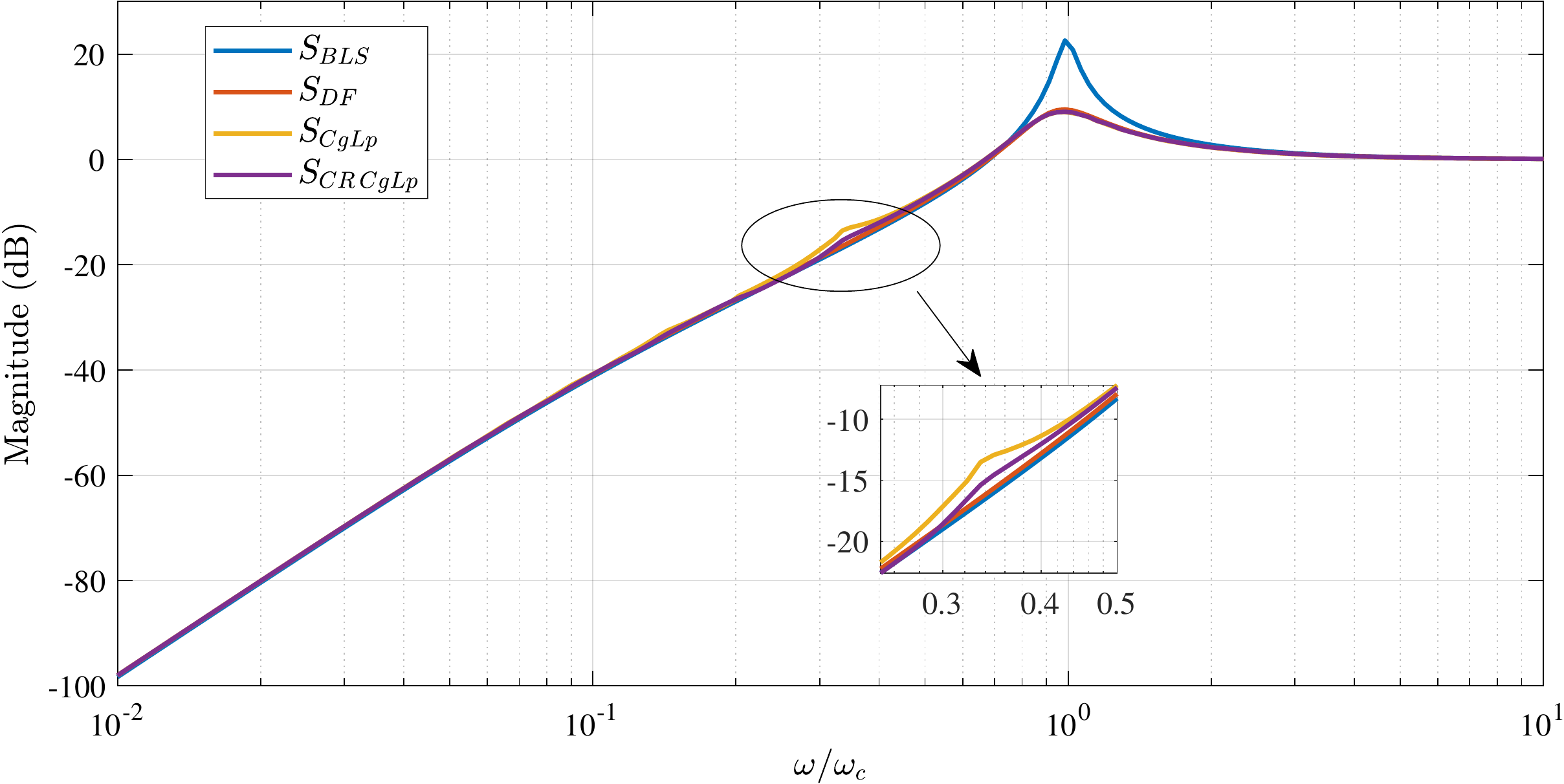}
	\caption{Sensitivity plot for BLS ($S_\text{BLS}$) along sensitivity for reset control systems calculated based on DF ($S_\text{DF}$), and sensitivity calculated  based on infinity norm, i.e., $\frac{\Vert e(t) \Vert_2}{\Vert r(t) \Vert_2}$ for CgLp and CR CgLp ($S_\text{CgLp}$ and $S_\text{CR\, CgLp}$).}
	\label{fig:sensitivity_sim}
\end{figure}
According to Fig.~\mbox{\ref{fig:sensitivity_sim}}, the presence of either CgLp or CR~CgLp reduces the peak of sensitivity significantly, which is logical because both of them increase the phase margin of the system. At the same time because $\omega_{r}$ is tuned to reduce the higher-order harmonics, it was expected that sensitivity of CgLp and CR~CgLp, namely, $S_\text{CgLp}$ and $S_\text{CR~CgLp}$, closely match the sensitivity of the BLS and the sensitivity approximated by DF, i.e., $S_\text{BLS}$ and $S_\text{DF}$. However, the CR CgLp because of lower higher-order harmonics has closer to ideal behaviour than CgLp. This analysis indicates that the significant improvement in transient behaviour of the CR~CgLp architecture not only has almost no negative effect on steady-state behaviour but also positively affects it by reducing the peak of sensitivity.  \\
To summarize the rule of thumb tuning guideline to CR~CgLp elements the suggested values for different parameters are presented in Table~\ref{tab:thumb}.\\
\begin{table}[t!]
	\centering
	\caption{The rule of thumb tuning values for parameters of CR CgLp.}
	\label{tab:thumb}
	\begin{tabular}{@{}cccccc@{}}
		\toprule
		Parameter & $\omega_r$                & PM                     & $\omega_l$                   & $\omega_h$   & $\omega_f$    \\ \midrule
		Value     & $[\omega_c, 1.5\omega_c]$ & $[15^\circ, 25^\circ]$ & $[0.3\omega_c, 0.6\omega_c]$ & $20\omega_c$ & $20\omega_c$ \\ \bottomrule
	\end{tabular}
\end{table}
The data-based analysis done in previous sections was for mass plants. However, the concepts and the procedure can be done for generalized for  mass-spring-damper plants and the suggested rule of thumb tuning values roughly stands for every mass-spring-damper plant. To verify, in the next section, a practical example CR CgLp will be designed and tested for a precision motion setup which has a mass-spring-damper plant with high-frequency modes.\\
\section{Illustrative Practical Example}
\label{sec:practice}
In order to validate the results of previous sections in precision motion control, an illustrative practical example is presented in this section. Comparison between different controllers such as PID, PID+CgLp and PID+CR~CgLp is presented in this section. For the sake of conciseness, in the rest of the paper, PID+CgLp and PID+CR~CgLp are shortly called, CgLp and CR~CgLp controllers, respectively.
\subsection{Plant}
\begin{figure}[t!]
	\centering
	\includegraphics[width=0.6\textwidth]{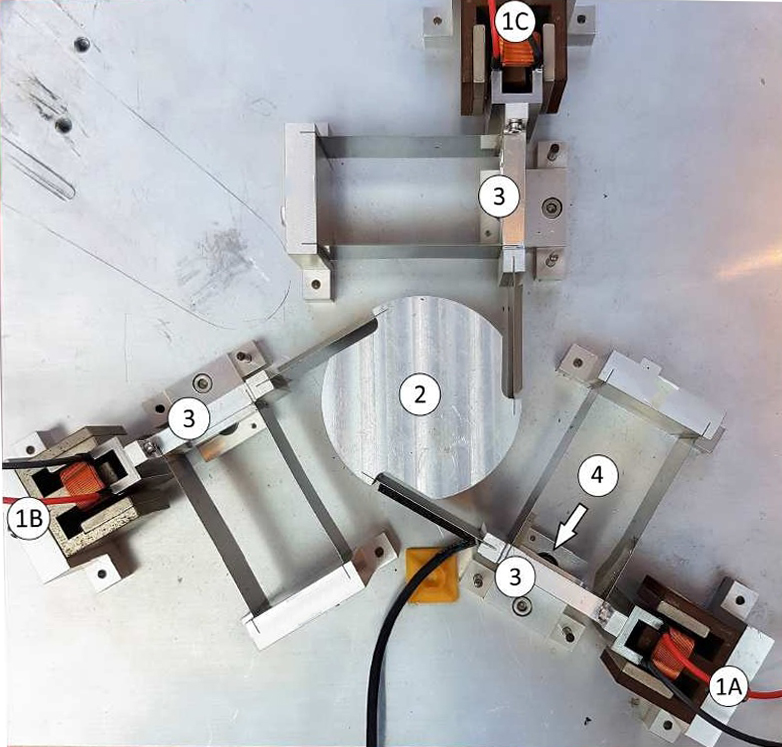}
	\caption{Three degrees of freedom planar precision positioning system called ``Spyder''. Spyder is actuated using three voice coil actuators indicated as 1A, 1B and 1C. The actuators are directly connected to masses indicated by 3. Each of these masses are solely connected to the base through two leaf flexures. The position of these masses are being sensed by linear encoders indicated by 4. }
	\label{fig:spider}
\end{figure}
The precision positioning stage ``Spyder'' is depicted in Fig.~\ref{fig:spider} is a 3 degrees of freedom planar
positioning stage which is used for validation. Since reset controllers in this paper are defined for SISO systems, only the actuator 1A is used to position the mass
rigidly connected to it. An NI compactRIO system which is enhanced by a FPGA is used to implement the controllers at a sampling frequency of 10~kHz. Linear current source power amplifier is used
to drive the voice coil actuator and a Mercury M2000 linear encoder, indicated as 4 in the Fig.~\ref{fig:spider} senses the position of the mass with a resolution of 100 nm.
The FRF of the stage is identified and depicted in Fig.~\ref{fig:indent}. The identification reveals that the plant shows a behaviour similar to that of a collocated double mass-spring-damper with additional parasitic dynamics at high frequencies. For the sake of better illustration of control design a mass-spring-damper transfer function has been fitted to the FRF data presented in the Eq.~\eqref{eq:plant}.
\begin{equation}
	\label{eq:plant}
	P(s)=\frac{9836e^{-0.0001s}}{s^2+8.737s+7376}
\end{equation}
\begin{figure}[t!]
	\centering
	\includegraphics[width=\textwidth]{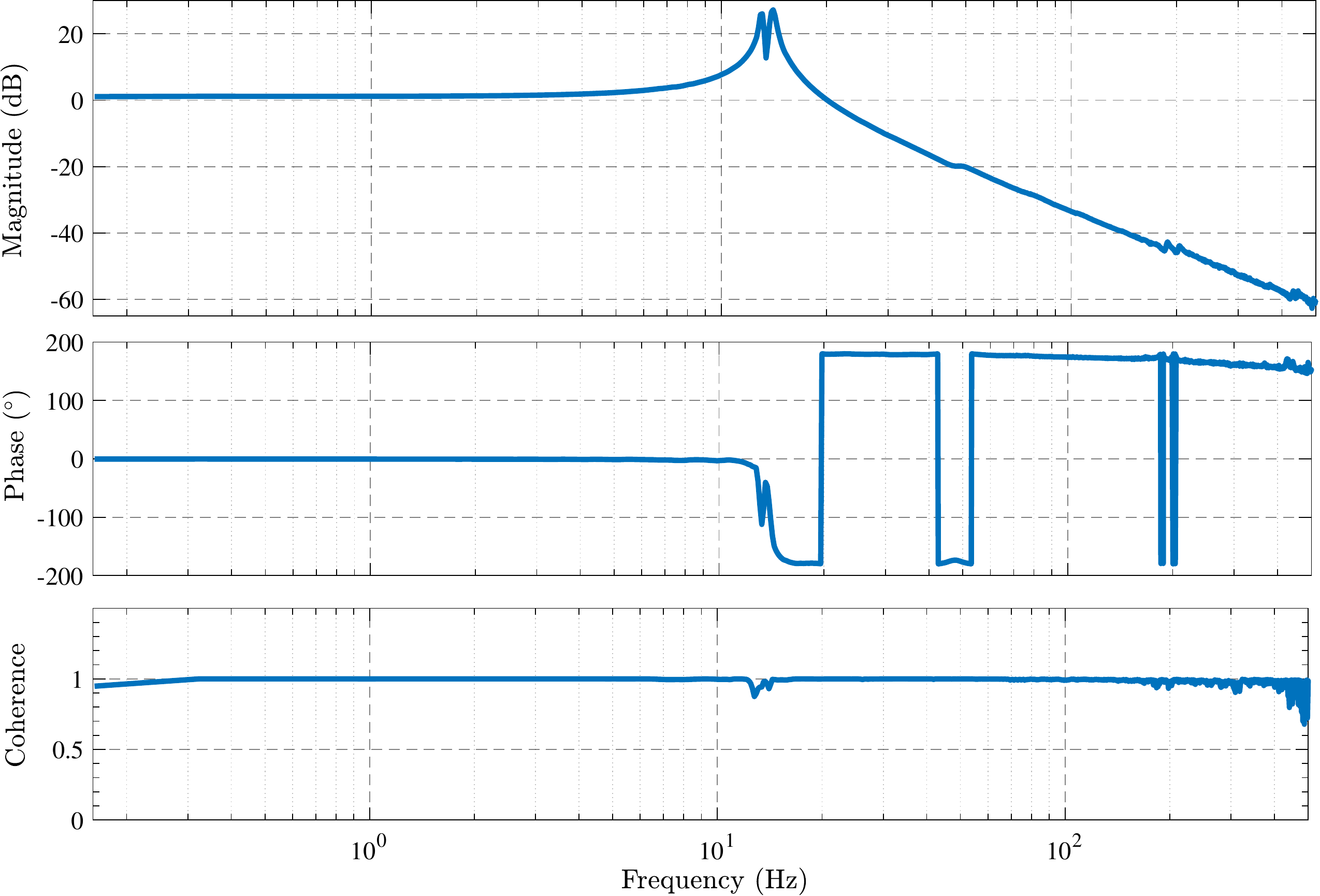}
	\caption{FRF identification of actuator 1A positioning the attached mass.}
	\label{fig:indent}
\end{figure}
\subsection{Controller Design Approach}
In order to compare the performance of PID and CR~CgLp and show the superiority of the CR~CgLp over PID in both steady-state and transient, four controllers were designed. PID controllers are tuned following the tuning rules presented in~\mbox{\cite{schmidt2020design}} and reset controllers are designed following the guidelines presented in the paper.  The controller loop is already depicted in Fig~\ref{fig:block_closed}. However, due to presence of noise in practice, a first order low-pass filter, $\frac{1}{s/\omega_z+1}$, has been added to the loop. The parameters for designed controllers is presented in Table~\ref{tab:c_parameters}.\\
\begin{table}[t!]
	\caption{The parameters for designed controllers. $\omega_c=400$ Hz.}
	\label{tab:c_parameters}
	\centering
	\begin{adjustbox}{max width=\textwidth}
		\begin{tabular}{@{}ccccccccc@{}}
			\toprule
			Parameter & $\omega_i$    & $\omega_d$     & $\omega_t$    & $\omega_z$  & $\omega_r$ & $\omega_l$    & $\omega_h$  & $\omega_f$    \\ \midrule
			PID \#1   & $\omega_c/10$ & $\omega_c/2.5$ & $2.5\omega_c$ & $5\omega_c$ & N/A        & N/A           & N/A         & N/A          \\
			PID \#2   & $\omega_c/10$ & $\omega_c/5$   & $5\omega_c$   & $5\omega_c$ & N/A        & N/A           & N/A         & N/A          \\
			PID \#1 + CgLp   & $\omega_c/10$ & $\omega_c/2.5$   & $2.5\omega_c$   & $5\omega_c$ & $\omega_c$ 	  & N/A           & N/A         & $20\omega_c$ \\ 			
			PID \#1 + CR CgLp   & $\omega_c/10$ & $\omega_c/2.5$   & $2.5\omega_c$   & $5\omega_c$ & $\omega_c$ & $\omega_c/8$ & $5\omega_c$ & $20\omega_c$ \\ \bottomrule
		\end{tabular}
	\end{adjustbox}	
\end{table}
Since the input signal to $L(s)$ is $e(t)$, this element will amplify the noise present in $e(t)$ and thus creates excessive zero crossings and thus excessive reset actions~\cite{cai2020optimal}. In order to avoid this phenomenon, $\omega_{h}$ has chosen to be smaller than the rule-of-thumb guidelines provided in previous sections to better attenuate the high-frequency content of the signal. This change in $\omega_h$ increases the overshoot in step response, to compensate, $\omega_{l}$ has chosen to be smaller than rule-of-thumb guidelines.  \\
PID \#1 can also be considered the BLS for the CR~CgLp \emph{controller}, since the latter is simply PID \#1 with CR~CgLp \emph{element} preceding it, as can be seen in Fig.~\ref{fig:block_closed}. The practical study will show that adding the CR~CgLp \emph{element} to a linear PID controller will improve the transient and the steady-state characteristics simultaneously.\\
The open-loop HOSIDF analysis of the CR~CgLp controller and the bode plot of the PID controllers are depicted in Fig~\ref{fig:hosidf_open}. Due to choosing of $\omega_{r}$ according to Fig.~\ref{fig:wr_sweep} and the architecture of CR~CgLp, it can be seen that the magnitude of higher-order harmonics for CR CgLp are at least 60 dB smaller than first-order harmonic. Thus, it is expected that the steady-state response of the system closely follows the amplitude of the first-order harmonic. 
\begin{figure}[t!]
	\centering
	\includegraphics[width=\textwidth]{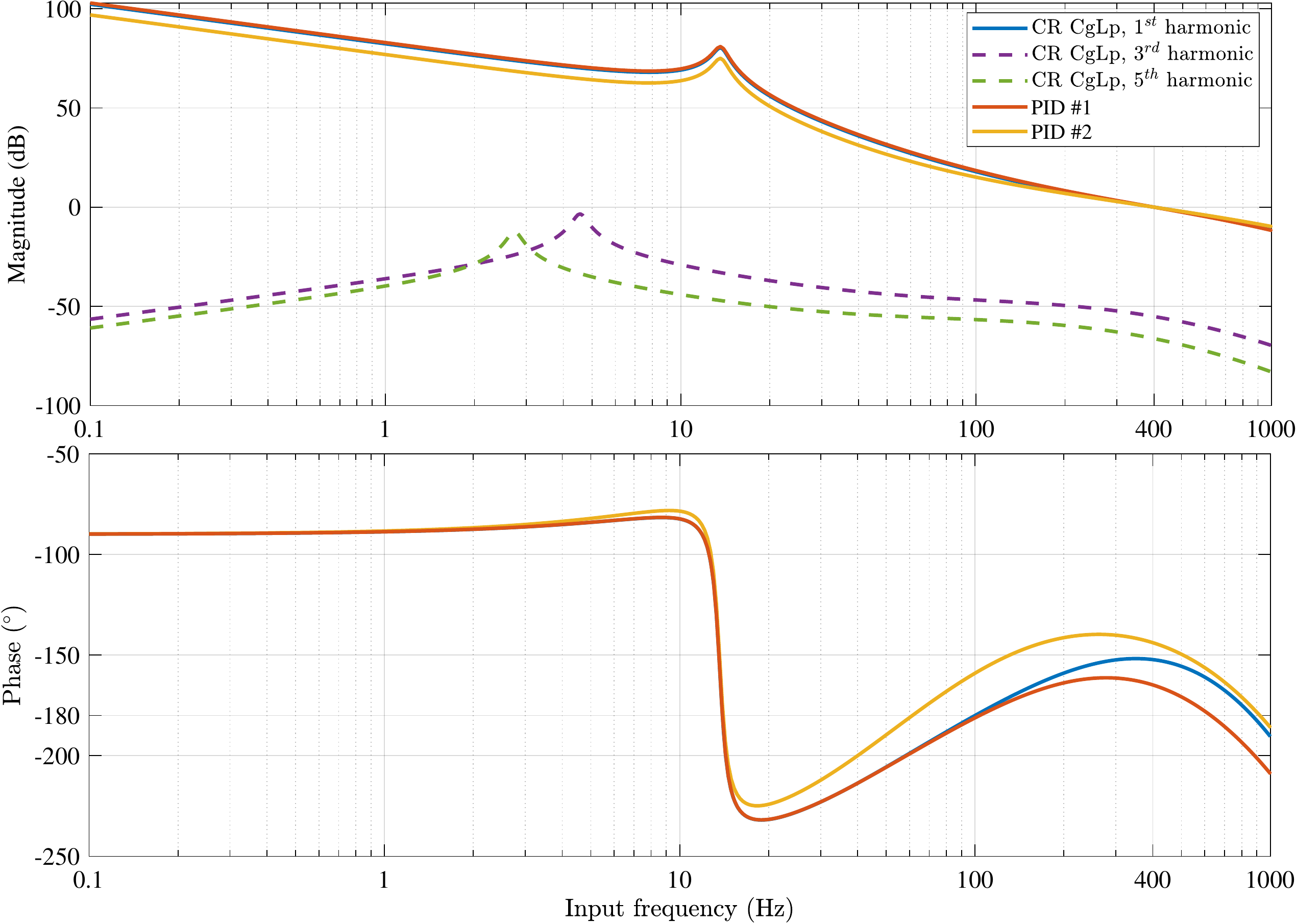}
	\caption{Open-loop HOSIDF analysis of the CR CgLp and Bode plot of PID controllers including the plant. PM for the CR CgLp, PID \#1 and PID \#2 are respectively, $25^\circ$, $15^\circ$ and $35^\circ$.}
	\label{fig:hosidf_open}
\end{figure}
\subsection{Comparison of the steady-state response}
For comparison of the precision of the controllers in terms of steady-state sinusoidal tracking the sensitivity plot of the controllers are depicted in Fig.~\ref{fig:hosidf_open}. For this purpose, sinusoidal signals between 1 and 500 Hz has been input as $r(t)$ and $\frac{\Vert e(t) \Vert_2}{\Vert r(t) \Vert_2}$ has been calculated and plotted for each sinusoidal.\\
In the range of $[1,10]$ Hz, the sensitivity of all controllers seemed to be lower bounded by -60 dB, this effect is caused by the quantization and the precision of the sensor.  However, comparing PID \#1 and CR CgLp in range of $[10, 500]$ Hz reveals that performance of the CR CgLp closely matches PID \#1 in lower frequencies and its peak of sensitivity is 1.5 dB lower. Thus, one can conclude that the steady-state performance of the linear controller is improved by introducing the proposed element. For the case of PID \#2, the clear waterbed effect can be seen, i.e., by widening the band of differentiation, at the cost of losing precision at lower frequencies, the peak of sensitivity is reduced. As opposed by CR CgLp, where reduction of peak of sensitivity achieved without sacrificing the precision at lower frequencies. Although in linear control context one would expect PID \#2 to have better transient response because of lower speak of sensitivity, in the next subsection, it will be shown that this does not hold true in nonlinear context and CR~CgLp controller shows better transient response despite of having higher peak of sensitivity.
\begin{figure}[t!]
	\centering
	\includegraphics[width=\textwidth]{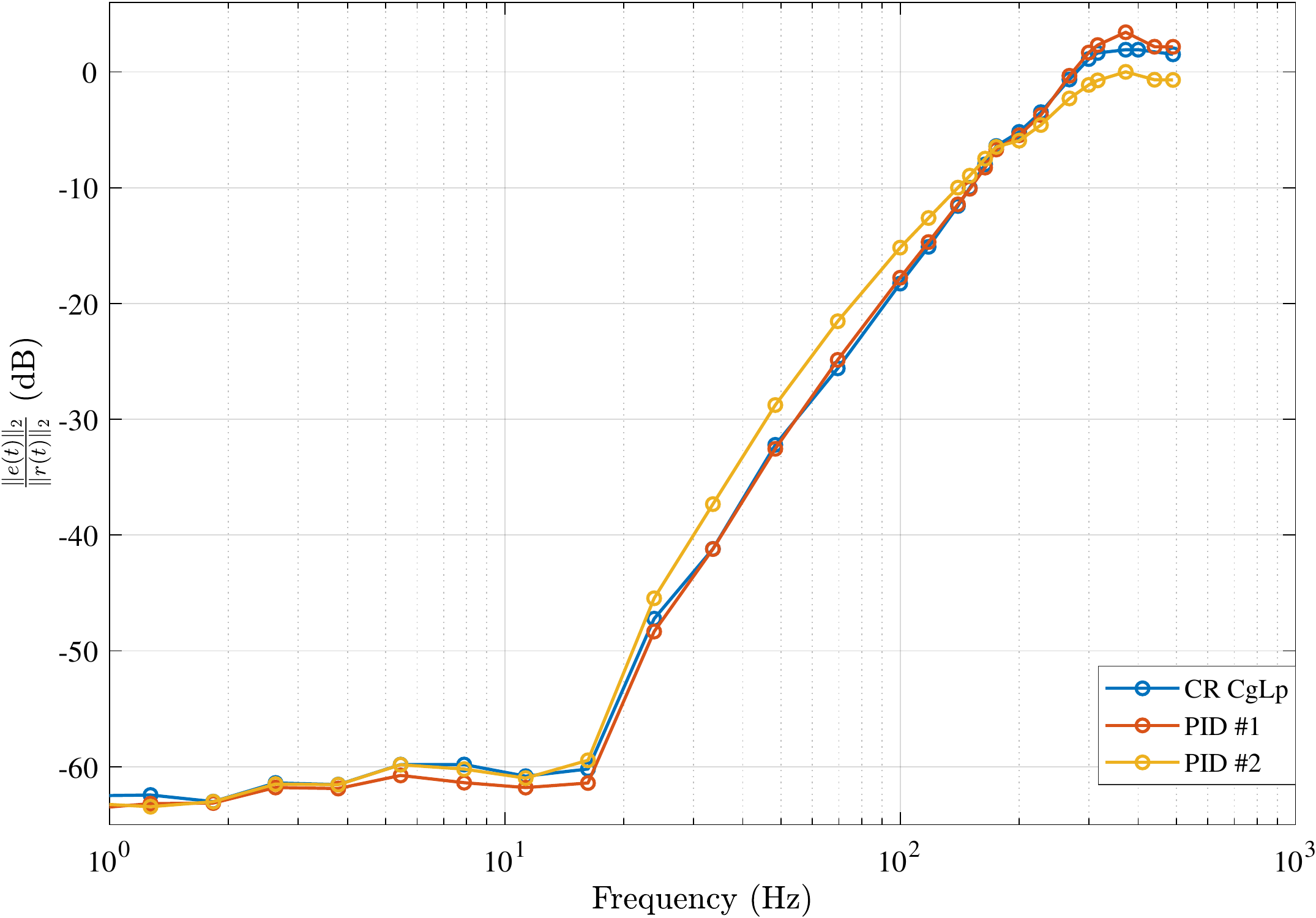}
	\caption{The closed-loop sensitivity of controllers for sinusoidal signals with frequencies in $[1,500]$ Hz. Frequencies above 500 Hz are not recorded due to the actuator limitations. The sensitivity plot of CgLp closely matches that of CR~CgLp, thus, it is not shown for the sake of clarity.}
	\label{fig:sensitivity_practice}
\end{figure}
\subsection{Comparison of the transient response}
\begin{figure}[t!]
	\centering
	\includegraphics[width=\textwidth]{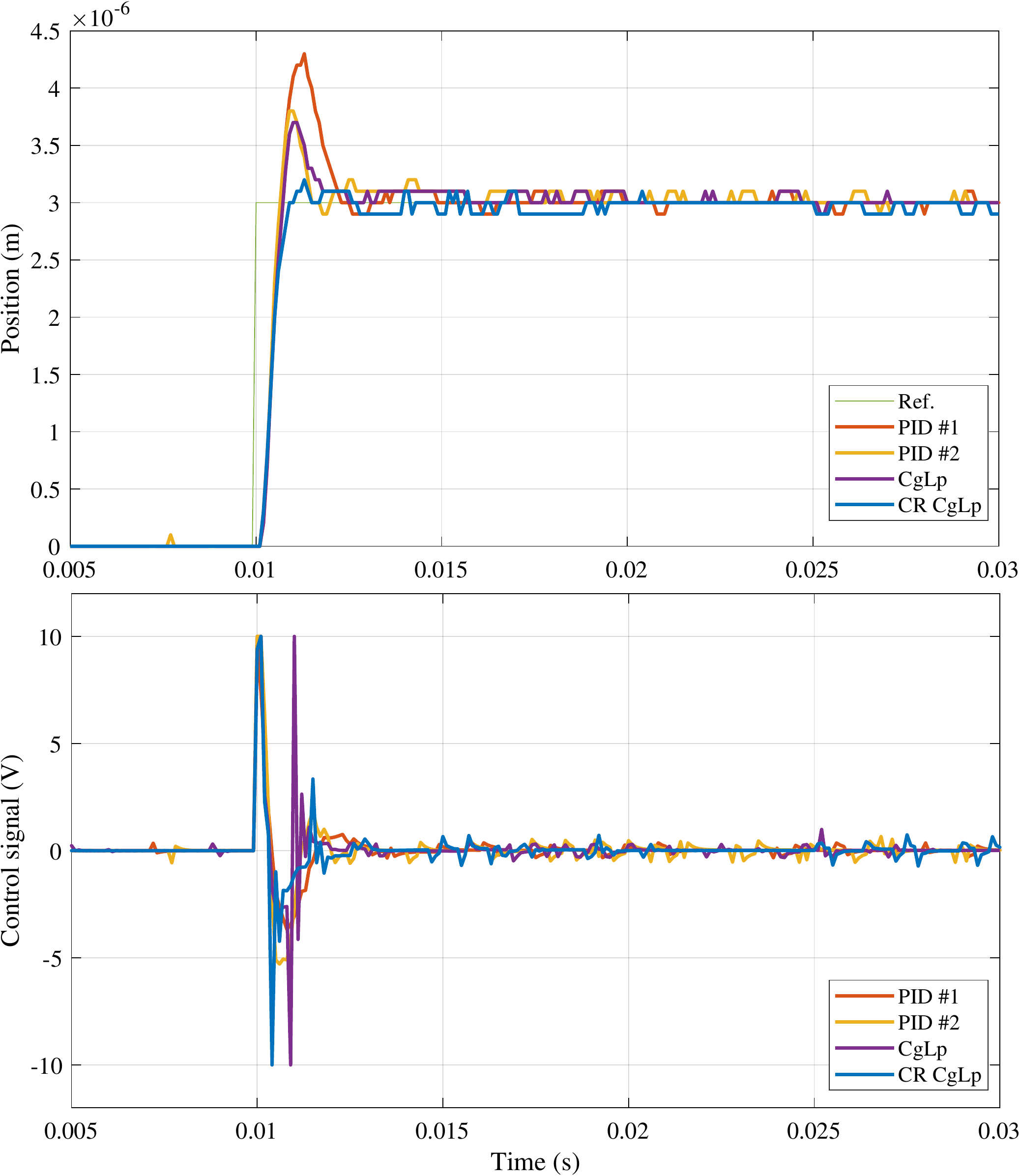}
	\caption{Step response and its corresponding control signal for the controllers introduced in Table~\ref{tab:c_parameters}. The overshoot for CR CgLp, PID~\#1, PID~\#2 and CgLp is respectively, $3\%$, $43\%$, $28\%$, $22\%$ and the $96\%$ settling times are respectively, 11.6 ms, 12.2 ms, 14.4 ms and 11.8 ms.}
	\label{fig:step_practice}
\end{figure}
For comparison of the step responses of the controllers, a step input of $0.15\mu$m height has been used. The response of the controllers are depicted in Fig.~\ref{fig:step_practice}. As it can be seen the CR CgLp shows a no-overshoot performance where PID \#1 shows an overshoot of $38\%$. It is noteworthy that according to Fig.~\ref{fig:sensitivity_practice}, these two controllers have matching sensitivity at lower frequencies. The settling time has also improved by $25\%$. This example clearly demonstrates that by adding CR~CgLp \textit{element} to an existing PID linear loop, one can achieve a no-overshoot performance and generally  significantly improved transient response while maintaining the steady-state precision. \\
The peak of sensitivity for both CgLp and CR~CgLp controllers are the same, however the overshoot of the CR~CgLp is $28\%$ lower than CgLp and that of CgLp is $10\%$ lower than that of PID. This results validates that the transient performance of the reset controllers, especially the overshoot, is affected but not solely by PM and peak of sensitivity. The architecture and $\omega_{l}$ also play role. The effect of $\omega_{l}$ will be validated further.\\
The reduction of overshoot for PID~\#2 compared to PID~\#1 was obvious due to wider band of differentiation and thus reduced peak of sensitivity. However, despite the fact that its peak of sensitivity is lower than CR~CgLp, the overshoot is still larger than that of CR~CgLp. Meanwhile steady-state precision was already shown to be lower that CR~CgLp. \hl{It has to be noted, because of the relatively high bandwidth which is chosen for the controllers, i.e., 400 Hz, and limitations of the actuator, control signal for the controllers come close to saturation in only one sample of time. Nevertheless, it will be shown later that this is not the case for lower bandwidths, even for larger references.} 
\subsection{The effect of $\omega_l$}
\begin{figure}[t!]
	\centering
	\includegraphics[width=\textwidth]{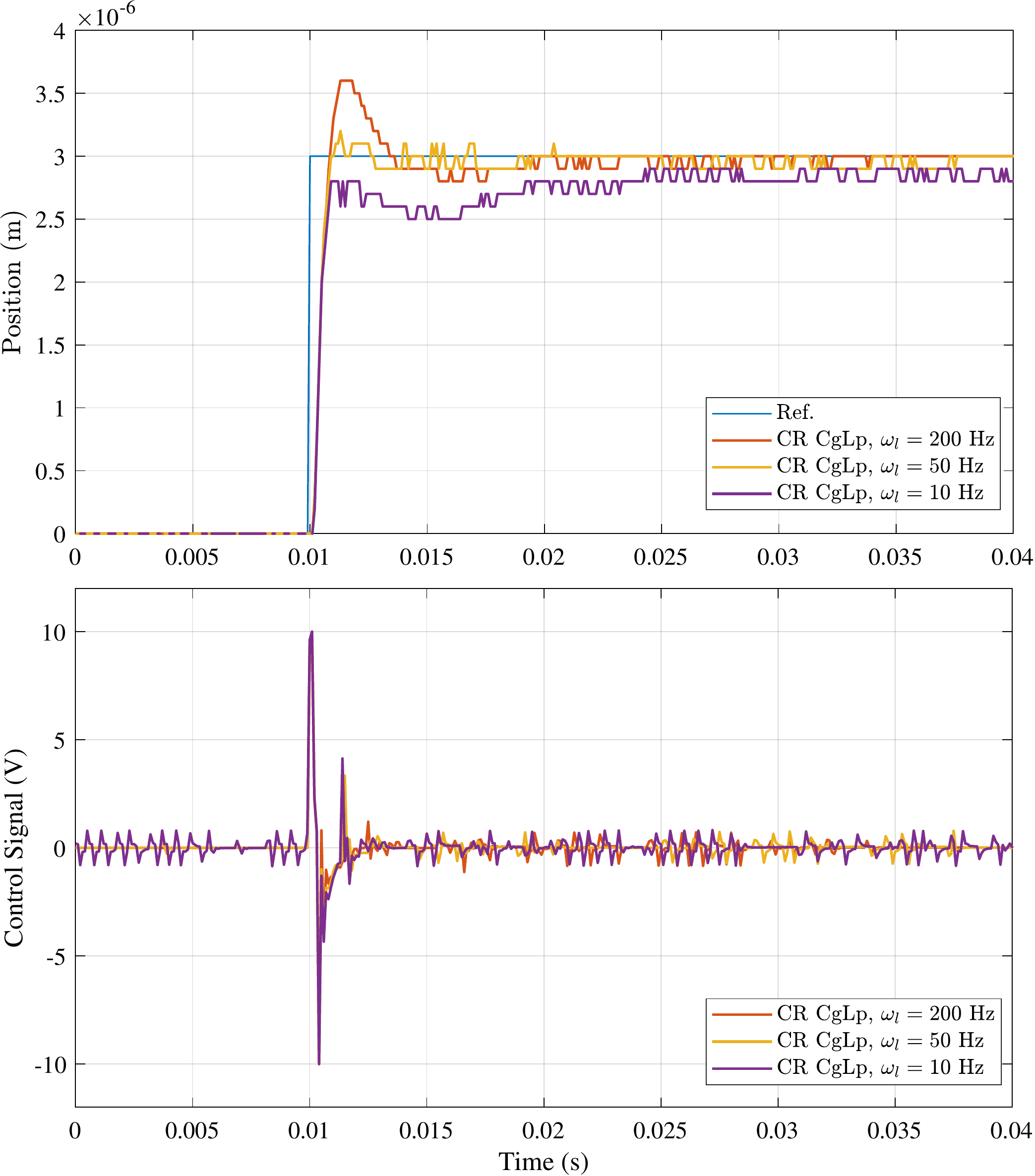}
	\caption{The effect of varying $\omega_{l}$ on transient response and its corresponding control signal of the CR CgLp.}
	\label{fig:step_vary_wl}
\end{figure}
In Fig.~\ref{fig:step_practice}, $\omega_{l}=50$ Hz. In order to validate the effect of $\omega_{l}$ on transient response, the step response for different values of $\omega_{l}$ while maintaining the other parameters is depicted in Fig.~\ref{fig:step_vary_wl}. It can be clearly seen that overshoot keeps decreasing with reduction of $\omega_{l}/\omega_{c}$. Furthermore, it can be also validated that settling time will increase when $\omega_{l}/\omega_{c}$ drops below a certain threshold. This phenomenon can be due to the fact that too much reset and resetting too soon can jeopardize the effect of integrator.  It is noteworthy that according to Proposition~\ref{rem:df}, the value of $\omega_{l}$ does not have an effect on DF and thus steady-state tracking performance of the system.
\subsection{Complex-order behaviour}
\begin{figure}[t!]
	\centering
	\includegraphics[width=\textwidth]{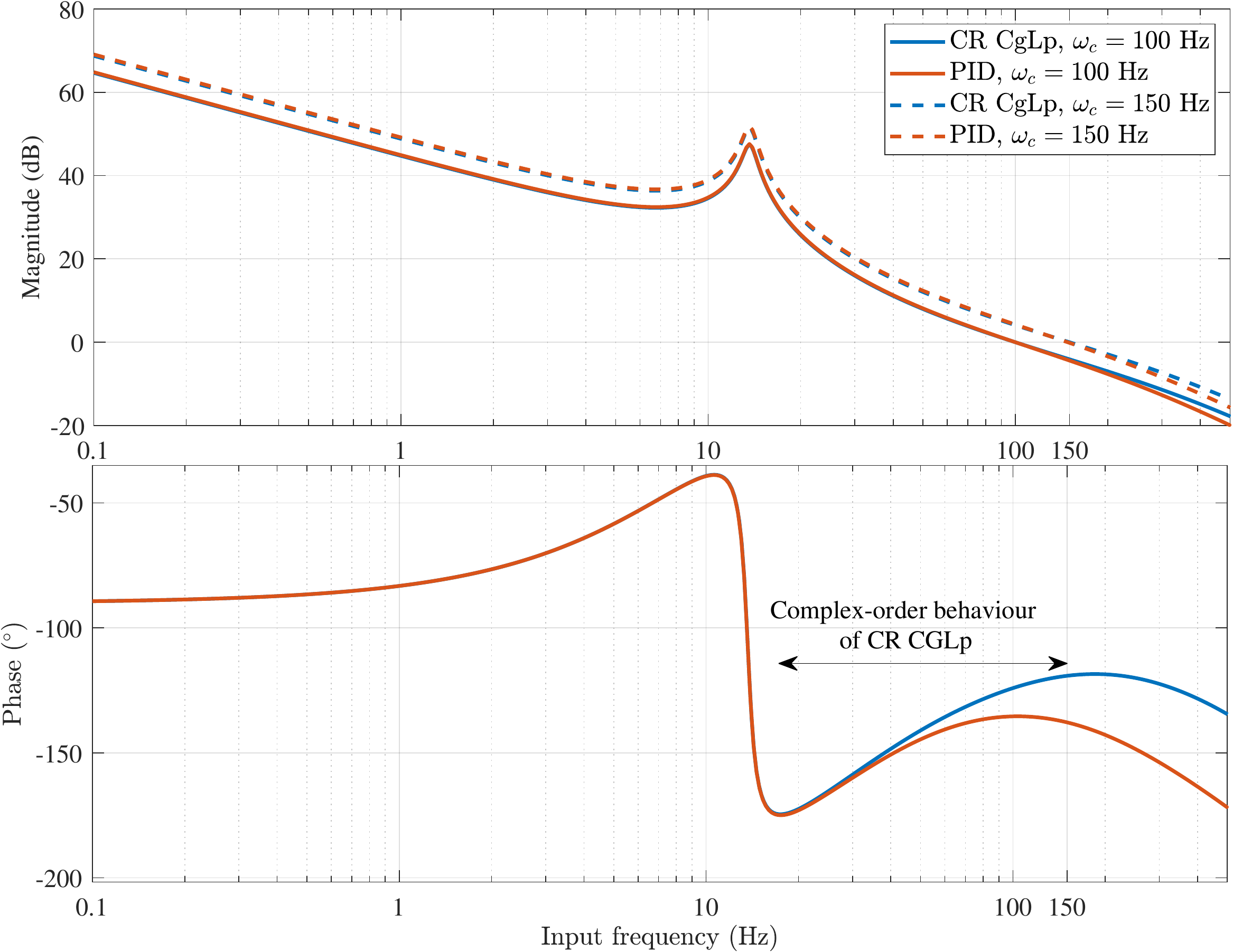}
	\caption{Bode diagram for a PID and DF diagram of a CR CgLp, showing the complex-order behaviour of CR CgLp. PM for CR~CgLp at 100 Hz is $55^\circ$ and for 150 Hz is $60^\circ$. PM for the PID at 100 Hz is $45^\circ$ and for 150~Hz is $43^\circ$. }
	\label{fig:bode_complex}
\end{figure}
Another interesting behaviour of the CR CgLp controller is the ability to create a complex-order behaviour as depicted in Fig.~\ref{fig:bode_complex}. Two controllers have been designed for $\omega_c= 100$~Hz. In the case of gain variation of 5 dB, $\omega_{c}$ will change to 150~Hz, in such a situation, PID loses $3^\circ$ of PM while CR~CgLp will show a complex-order behaviour, meaning the phase increases while gain decreases~\cite{valerio2019reset}, and gain $5^\circ$ more PM. Thus the modulus margin for PID is expected to be decreased and for CR~CgLp to be increased.\\
Furthermore, an increase on overshoot of PID and a decrease in that of CR~CgLp is expected. The validation of this expectation has been been done in practice and the step responses are shown in Fig~\ref{fig:step_complex}. However, the increase in PM is not the only reason for decrease of overshoot in CR~CgLp. Since $\omega_{c}$ is increased and $\omega_{l}$ has been kept constant, the ratio of $\omega_{l}/\omega_{c}$ is subsequently reduced, which also helps the reduction of overshoot.  
\begin{figure}[t!]
	\centering
	\includegraphics[width=\textwidth]{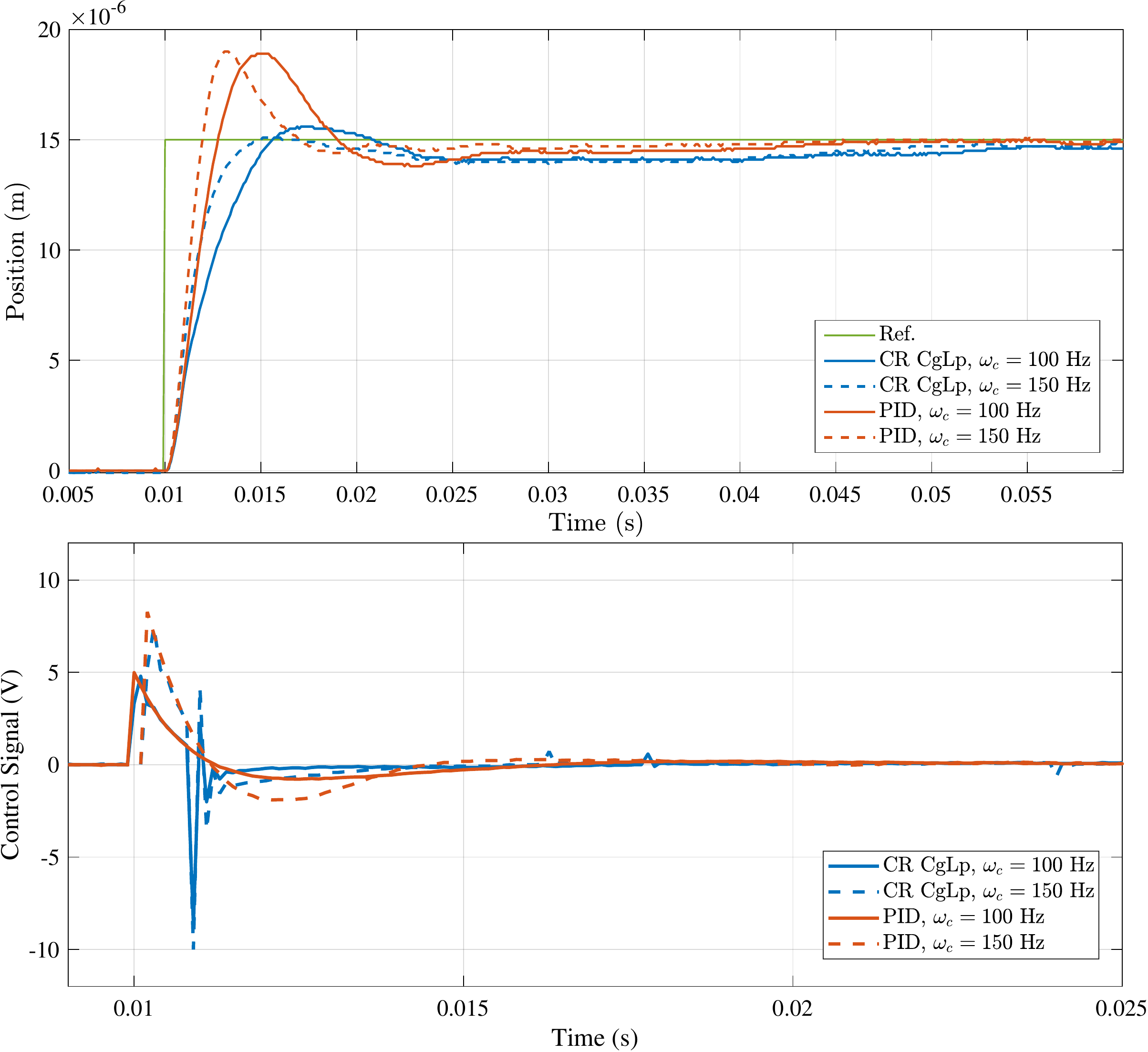}
	\caption{The step response and its corresponding control signal for the controllers shown in Fig.~\ref{fig:bode_complex} for gain variation of 5 dB. PM for PID increases from $25\%$ to $26\%$ for increased $\omega_c$ while PM for CR~CgLp decreases from $4\%$ to $0\%$. }
	\label{fig:step_complex}
\end{figure}

\section{Conclusion}
A new architecture for reset elements, named ``Continuous Reset Element'' was presented in this paper. Such an architecture consists of having a linear lead and lag element, before and after of a reset element. It was shown that such an architecture not only does not influence the DF gain and phase of reset elements, but also reduces the magnitude of higher-order harmonics, which will positively effect the steady-state tracking precision of the reset controllers. Furthermore, it was shown that having a strictly proper lag element after the reset element will make the output of the  reset element continuous as opposed to conventional reset elements.\\
Moreover, it was shown that such CR architecture also can significantly improve the transient response of the reset control systems for mass-spring-damper plants without negatively affecting the steady-state performance, an overcoming over waterbed effect. To this end, a numerical study was done on a reset element called CgLp and a mass plant and it was shown that by using the CR architecture, the settling time and overshoot of the CR CgLp control system can be improved both comparing to CgLp element itself and the BLS.\\
To further validate the achieved results, a practical example was introduced where a precision motion setup was identified and four controllers were implemented and compared in terms of transient and steady-state performance. It was shown that for a mass-spring-damper plant the CR~CgLp controller was able to achieve a no-overshoot performance and a reduced settling time while matching the steady-state performance of the PID BLS at lower frequencies and a showing a reduced peak of sensitivity.\\
However, the presence of a lead element before a reset element can introduce excessive reset actions to the control because of noise. To avoid such a phenomenon a low-pass filter or in general term a shaping filter can be used to remove the high-frequency content of the signal. For which a more extensive research is required. The latter can be ongoing work of the propose design.


%

%

\section*{Acknowledgment}

This work was supported by NWO, through OTP TTW project \#16335.


\bibliography{ref}

\end{document}